\documentclass[letterpaper, 10 pt, conference]{ieeeconf}
\IEEEoverridecommandlockouts 
\usepackage{amsthm}
\usepackage{graphicx}
\usepackage[caption=false,font=footnotesize]{subfig}
\usepackage{amsmath,amssymb,amsfonts,mathtools}
\usepackage{tikz}
\usetikzlibrary{arrows.meta,positioning,calc,arrows}
\usepackage{pgfplots}
\usepgfplotslibrary{fillbetween}
\pgfplotsset{
  direction field/.style={
    rcGray, line width=0.35pt,
    quiver={u=\thisrow{u}, v=\thisrow{v}, scale arrows=1,
            every arrow/.append style={-{Stealth[length=1.8pt, width=1.5pt]}}},
  },
}
\pgfplotsset{compat=1.18,
  every axis/.append style={font=\scriptsize, line width=0.6pt, tick style={line width=0.4pt}, grid=none},
  legend style={draw=none, fill=none, font=\scriptsize, legend cell align=left},
  every axis label/.append style={font=\scriptsize},
  every axis x label/.append style={yshift=2pt},
  every axis y label/.append style={yshift=-4pt}}
\usepackage[dvipsnames]{xcolor}
\definecolor{rcNavy}{RGB}{20,35,80}
\definecolor{rcBlue}{RGB}{37,99,235}
\definecolor{rcPink}{RGB}{219,39,119}
\definecolor{rcGray}{RGB}{160,160,160}
\definecolor{rcTeal}{RGB}{13,148,136}

\usepackage{xcolor}
\usepackage[dvipsnames]{xcolor}
\usepackage[noabbrev]{cleveref}

\usepackage{enumerate}
\usepackage{mathtools}

\usepackage{cite}
\theoremstyle{definition}
\newtheorem{definition}{Definition}
\newtheorem{remark}{Remark}
\theoremstyle{plain}
\newtheorem{theorem}{Theorem}
\newtheorem{lemma}[theorem]{Lemma}
\newtheorem{proposition}[theorem]{Proposition}
\newtheorem{corollary}[theorem]{Corollary}

\usepackage[acronym,nomain,nowarn,nolist]{glossaries}
\newacronym{kkl}{KKL}{Kazantzis--Kravaris/Luenberger}
\newacronym{pbh}{PBH}{Popov--Belevitch--Hautus}
\newacronym{rms}{RMS}{root-mean-square}

\newcommand{\R}{\mathbb{R}}

\newcommand{\Rp}{\mathbb{R}_{+}}

\newcommand{\Wout}{W_{\mathrm{out}}}

\newcommand{\range}{\operatorname{range}}
\newcommand{\rank}{\operatorname{rank}}

\renewcommand{\Re}{\operatorname{Re}}

\newcommand{\dd}{\,\mathrm{d}}

\newcommand{\eqrefrange}[2]{\eqref{#1}--\eqref{#2}}

\newcounter{assumptioncounter}
\newenvironment{assumptionlist}
  {\begin{list}{}{\setlength{\leftmargin}{2.3em}\setlength{\labelwidth}{1.9em}
   \setlength{\labelsep}{0.4em}\setlength{\itemsep}{1pt}\setlength{\topsep}{3pt}
   \setlength{\parsep}{0pt}\setlength{\partopsep}{0pt}}}
  {\end{list}}
\newcommand{\assumption}[1]{\refstepcounter{assumptioncounter}
  \item[\textup{(A\arabic{assumptioncounter})}]\label{#1}}
\newcommand{\A}[1]{\textup{(A\ref{#1})}}

\usepackage[acronym,nomain,nowarn,nolist]{glossaries}
\newacronym{ESP}{ESP}{echo state property}
\newacronym{FBI}{FBI}{Francis--Byrnes--Isidori}
\newacronym{KKL}{KKL}{Kazantzis--Kravaris/Luenberger}

\title{\LARGE \bf The internal model principle in reservoir computing 
}

\author{Danyal Saqib and Alberto Padoan
\thanks{ D. Saqib and A. Padoan   are with the Department of Electrical and Computer Engineering, University of British Columbia, Vancouver, BC, V6T 1Z4, Canada (email: {\tt\small dsaqib@student.ubc.ca, alberto.padoan@ubc.ca}).  This work was supported by the Natural Sciences and Engineering Research Council of Canada (NSERC) (grant numbers: RGPIN-2025-06895 and DGECR-2025-00382).}
}

\begin{document}

\maketitle
\thispagestyle{empty}
\pagestyle{empty}

\begin{abstract}
The paper studies learning in reservoir computing systems through the lens of the internal model principle. The key idea is to model training signals as outputs of an exogenous signal generator, thus connecting reservoir computing to output regulation, observer theory, and frequency-domain interpolation. For linear signal generators, we characterize the steady-state response of a linear reservoir through invariant subspaces, establishing connections with Luenberger observers and classical frequency-domain interpolation. For nonlinear signal generators, we obtain analogous results using invariant manifolds, connecting synchronization to Kazantzis--Kravaris/Luenberger observers and readout training to nonlinear moment matching.  Case studies illustrate how our perspective explains possible instability of trained predictors and informs reservoir design.  
\end{abstract}

\section{Introduction}

Reservoir computing is a computationally efficient framework for modeling temporal data \cite{jaeger_echo_2001,maass_real_2002}. The idea is to drive a fixed dynamical system, the reservoir, with an input signal to be predicted or whose patterns are to be recognized. Learning consists of fitting a weighted sum of the reservoir state variables, the readout, by linear regression \cite{jaeger_echo_2001,maass_real_2002}.  Since the readout is the only trained parameter, any dynamical system with suitable forgetting properties can act as a reservoir \cite{jaeger_echo_2001,maass_real_2002}, so neural networks \cite{jaeger_echo_2001,maass_real_2002}, physical systems \cite{tanaka_recent_2019,stepney_physical_2024}, and even biological substrates \cite{sumi_biological_2023} can serve as computational media. The requirement that a reservoir forget its initial condition is formalized by the \gls{ESP} \cite{jaeger_echo_2001,yildiz_re-visiting_2012,manjunath_echo_2013}, under which reservoir computing can approximate broad classes of input--output maps \cite{grigoryeva_echo_2018}. Reservoir selection and training nevertheless remain largely guided by empirical considerations \cite{lukosevicius_reservoir_2009,zhang_survey_2023}, often relying on large, nonlinear, randomly generated reservoirs. For instance, the \gls{ESP} guarantees neither the stability nor the robustness of the resulting predictors \cite{lu_attractor_2018,haluszczynski_good_2019}, motivating a principled approach to reservoir analysis and design.   

The paper shows that learning in reservoir computing is a manifestation of the internal model principle \cite{francis_internal_1976,wonham_linear_1985,isidori_nonlinear_1995}, according to which output regulation against a class of exogenous signals requires an internal model of the dynamics generating them. By modeling training signals as outputs of an exogenous signal generator, we show that, after a transient, the reservoir state is determined by that of the signal generator through a synchronization map \cite{lu_reservoir_2017,hart_generalised_2024}, whose graph is an invariant manifold of the interconnected system, and that feeding the trained readout back to the reservoir builds an internal model of the signal generator. This perspective connects reservoir computing with output regulation \cite{francis_internal_1976,wonham_linear_1985,isidori_nonlinear_1995}, observer design \cite{luenberger_introduction_1971,kazantzis_nonlinear_1997,andrieu_existence_2006}, and moment matching \cite{antoulas_approximation_2005,astolfi_model_2010}, opening new avenues for reservoir analysis and design using classical control-theoretic tools.

\emph{Contributions.} Our contributions may be summarized as follows.
For linear signal generators, we characterize the steady-state response of a linear reservoir through an invariant subspace determined by a Sylvester equation.  
This connects synchronization maps \cite{lu_reservoir_2017,hart_generalised_2024} to Luenberger observers \cite{luenberger_introduction_1971} and establishes the equivalence between  exact signal reproduction and classical frequency-domain interpolation \cite{antoulas_approximation_2005}.  For nonlinear signal generators, we characterize the steady-state response through an invariant manifold, thus connecting synchronization to \gls{kkl} observers \cite{kazantzis_nonlinear_1997,andrieu_existence_2006} and readout training to nonlinear moment matching \cite{astolfi_model_2010}.  Finally, simple case studies illustrate the theory and its implications for reservoir computing, showing why exact prediction with linear reservoirs requires finite-dimensional linear realizations, how unassigned closed-loop modes can destabilize prediction, and how information about the origin of training signals enables reservoir design with prescribed performance. 

\emph{Related work.} 
Theoretical studies of reservoir computing have largely focused on the \gls{ESP}, fading memory, and approximation properties \cite{jaeger_echo_2001,grigoryeva_echo_2018}, and on synchronization between the driving system and the reservoir \cite{lu_reservoir_2017,hart_generalised_2024}. In particular, \cite{hart_generalised_2024} establishes conditions for generalized synchronization in continuous-time reservoirs and gives, for linear reservoirs, an integral expression of the synchronization map, \cite{lu_attractor_2018} studies the role of transverse stability in autonomous prediction, and \cite{singh_echo_2025} offers a systems perspective on echo state networks. Closely related work developed independently and concurrently connects reservoir computing with \gls{kkl} observers \cite{bizyaeva_kkl_2026}, and the low-dimensional invariant manifolds of trained reservoirs are studied in \cite{kaplan_dominant_2026}. None of these works develops the connections with the internal model principle, output regulation \cite{francis_internal_1976,wonham_linear_1985,isidori_nonlinear_1995}, Luenberger observers \cite{luenberger_introduction_1971}, and moment matching \cite{antoulas_approximation_2005,astolfi_model_2010} pursued in this paper, which, for linear reservoirs, are made explicit through Sylvester and invariance equations. 
 
\begin{figure*}[!t]
    \centering
    \subfloat[Training stage ($t<T$)]{\resizebox{0.47\textwidth}{!}{\definecolor{rcNavy}{RGB}{20,35,80}
\definecolor{rcBlue}{RGB}{37,99,235}
\definecolor{rcPink}{RGB}{219,39,119}
\definecolor{rcGray}{RGB}{175,175,175}
\begin{tikzpicture}[scale=0.75, every node/.style={scale=0.75}, line cap=round, line join=round,
  trainpath/.style={draw=rcBlue},  predpath/.style={draw=rcGray},  sharedpath/.style={draw=black},
  trainfill/.style={fill=rcBlue}, predfill/.style={fill=rcGray},
  trainblock/.style={draw=black}, predblock/.style={draw=rcGray}, sharedblock/.style={draw=black},
  trainlabel/.style={text=black}, predlabel/.style={text=rcGray}, sharedlabel/.style={text=black},
  traintext/.style={text=rcBlue}, predtext/.style={text=rcGray},
  wavein/.style={draw=rcBlue, line width=0.9pt}, waveout/.style={draw=rcGray, line width=0.9pt},
  switcharrow/.style={-latex, draw=black}]

\draw[trainblock] (-6.6,-0.8) rectangle (-3.9,0.8);
\node[trainlabel] at (-5.25,0) {%
$\begin{array}{rcl}
\dot{w} \!\!&\!\! = \!\!&\!\! s(w) \\
v \!\!&\!\! = \!\!&\!\! \ell(w) \\
y \!\!&\!\! = \!\!&\!\! q(w)
\end{array}$};
\draw[sharedblock] (-0.9,-0.5) rectangle (1.3,0.5);
\node[sharedlabel] at (0.2,0) {$\dot{x} = f(x,u)$};
\draw[predblock] (2.4,-0.5) rectangle (3.6,0.5);
\node[predlabel] at (3.0,0) {$W_{\mathrm{out}}$};
\draw[trainblock] (-1.0,-3.2) rectangle (5.1,-1.8);
\node[trainlabel] at (2.05,-2.5) {$\displaystyle \min_{W_{\mathrm{out}} \in \mathbb{R}^{p \times n}} \int_0^T \big\| W_{\mathrm{out}}\,x(t) - y(t) \big\|^2 \, dt$};

\coordinate (P) at (-1.5,0);
\coordinate (A) at ($(P)+(180:0.9)$);
\coordinate (B) at ($(P)+(135:0.9)$);
\draw[trainpath] (A) circle (0.06);
\draw[predpath] (B) circle (0.06);
\fill (P) circle (0.06);
\draw[sharedpath] (P) -- ($(P)+(176:0.9)$);   

\coordinate (P2) at (-2.4,-2.5);
\coordinate (C) at ($(P2)+(180:0.9)$);
\draw[trainpath] (C) circle (0.06);
\fill[trainfill] (P2) circle (0.06);
\draw[trainpath] (P2) -- ($(P2)+(176:0.9)$);   

\draw[trainpath] (-3.9,0) -- ($(A)+(-0.06,0)$);
\node[traintext] at (-3.3,-0.3) {$v$};
\draw[sharedpath, -latex] (P) -- (-0.9,0);
\node at (-1.2,-0.3) {$u$};
\draw[sharedpath, -latex] (1.3,0) -- (2.4,0);
\node at (1.55,0.3) {$x$};
\fill[trainfill] (1.85,0) circle (0.06);
\draw[trainpath, -latex] (1.85,0) -- (1.85,-1.8);
\draw[predpath, -latex] (3.6,0) -- (5.7,0);
\node[predtext] at (5.35,-0.3) {$\hat{y}$};
\fill[predfill] (4.2,0) circle (0.06);
\draw[predpath] (4.2,0) -- (4.2,1.4) -- (B |- 0,1.4) -- ($(B)+(0,0.06)$);
\node[predtext] at (1.0,1.7) {$u = W_{\mathrm{out}}\,x$};
\draw[trainpath] (-5.25,-0.8) -- (-5.25,-2.5) -- ($(C)+(-0.06,0)$);
\node[traintext] at (-5.55,-1.65) {$y$};
\draw[trainpath, -latex] (P2) -- (-1.0,-2.5);
\draw[trainpath, -latex] (3.0,-1.8) -- (3.0,-0.5);

\begin{scope}[shift={(-3.85,0.45)}]
  \draw[wavein] plot[domain=0:0.9, samples=80, smooth] (\x, {0.13*sin(800*\x) + 0.06*sin(1400*\x)});
\end{scope}
\begin{scope}[shift={(4.5,0.45)}]
  \draw[waveout] plot[domain=0:0.9, samples=80, smooth] (\x, {0.13*sin(800*\x+40) + 0.06*sin(1400*\x+40)});
\end{scope}

\end{tikzpicture}}\label{fig:training_diagram}}
    \hfil
    \subfloat[Prediction stage ($t\geq T$)]{\resizebox{0.47\textwidth}{!}{\definecolor{rcNavy}{RGB}{20,35,80}
\definecolor{rcBlue}{RGB}{37,99,235}
\definecolor{rcPink}{RGB}{219,39,119}
\definecolor{rcGray}{RGB}{175,175,175}
\begin{tikzpicture}[scale=0.75, every node/.style={scale=0.75}, line cap=round, line join=round,
  trainpath/.style={draw=rcGray},  predpath/.style={draw=rcPink},  sharedpath/.style={draw=black},
  trainfill/.style={fill=rcGray}, predfill/.style={fill=rcPink},
  trainblock/.style={draw=rcGray}, predblock/.style={draw=black}, sharedblock/.style={draw=black},
  trainlabel/.style={text=rcGray}, predlabel/.style={text=black}, sharedlabel/.style={text=black},
  traintext/.style={text=rcGray}, predtext/.style={text=rcPink},
  wavein/.style={draw=rcGray, line width=0.9pt}, waveout/.style={draw=rcPink, line width=0.9pt},
  switcharrow/.style={-latex, draw=black}]

\draw[trainblock] (-6.6,-0.8) rectangle (-3.9,0.8);
\node[trainlabel] at (-5.25,0) {%
$\begin{array}{rcl}
\dot{w} \!\!&\!\! = \!\!&\!\! s(w) \\
v \!\!&\!\! = \!\!&\!\! \ell(w) \\
y \!\!&\!\! = \!\!&\!\! q(w)
\end{array}$};
\draw[sharedblock] (-0.9,-0.5) rectangle (1.3,0.5);
\node[sharedlabel] at (0.2,0) {$\dot{x} = f(x,u)$};
\draw[predblock] (2.4,-0.5) rectangle (3.6,0.5);
\node[predlabel] at (3.0,0) {$W_{\mathrm{out}}$};
\draw[trainblock] (-1.0,-3.2) rectangle (5.1,-1.8);
\node[trainlabel] at (2.05,-2.5) {$\displaystyle \min_{W_{\mathrm{out}} \in \mathbb{R}^{p \times n}} \int_0^T \big\| W_{\mathrm{out}}\,x(t) - y(t) \big\|^2 \, dt$};

\coordinate (P) at (-1.5,0);
\coordinate (A) at ($(P)+(180:0.9)$);
\coordinate (B) at ($(P)+(135:0.9)$);
\draw[trainpath] (A) circle (0.06);
\draw[predpath] (B) circle (0.06);
\fill (P) circle (0.06);
\draw[sharedpath] (P) -- ($(P)+(135:0.9)$);   

\coordinate (P2) at (-2.4,-2.5);
\coordinate (C) at ($(P2)+(180:0.9)$);
\draw[trainpath] (C) circle (0.06);
\fill[trainfill] (P2) circle (0.06);
\draw[trainpath] (P2) -- ($(P2)+(150:0.9)$);   

\draw[trainpath] (-3.9,0) -- ($(A)+(-0.06,0)$);
\node[traintext] at (-3.3,-0.3) {$v$};
\draw[sharedpath, -latex] (P) -- (-0.9,0);
\node at (-1.2,-0.3) {$u$};
\draw[sharedpath, -latex] (1.3,0) -- (2.4,0);
\node at (1.55,0.3) {$x$};
\fill[trainfill] (1.85,0) circle (0.06);
\draw[trainpath, -latex] (1.85,0) -- (1.85,-1.8);
\draw[predpath, -latex] (3.6,0) -- (5.7,0);
\node[predtext] at (5.35,-0.3) {$\hat{y}$};
\fill[predfill] (4.2,0) circle (0.06);
\draw[predpath] (4.2,0) -- (4.2,1.4) -- (B |- 0,1.4) -- ($(B)+(0,0.06)$);
\node[predtext] at (1.0,1.7) {$u = W_{\mathrm{out}}\,x$};
\draw[trainpath] (-5.25,-0.8) -- (-5.25,-2.5) -- ($(C)+(-0.06,0)$);
\node[traintext] at (-5.55,-1.65) {$y$};
\draw[trainpath, -latex] (P2) -- (-1.0,-2.5);
\draw[trainpath, -latex] (3.0,-1.8) -- (3.0,-0.5);

\begin{scope}[shift={(-3.85,0.45)}]
  \draw[wavein] plot[domain=0:0.9, samples=80, smooth] (\x, {0.13*sin(800*\x) + 0.06*sin(1400*\x)});
\end{scope}
\begin{scope}[shift={(4.5,0.45)}]
  \draw[waveout] plot[domain=0:0.9, samples=80, smooth] (\x, {0.13*sin(800*\x+40) + 0.06*sin(1400*\x+40)});
\end{scope}

\end{tikzpicture}}\label{fig:prediction_diagram}}
        \caption{Reservoir computing. Left: during training, the reservoir is driven by the signal $v$ produced by the exosystem, the signals $x$ and $y$ are recorded, and the readout $\Wout$ is obtained through a least squares fit. Right: for prediction, the loop is closed by feeding the readout back to the reservoir input.}
    \label{fig:reservoir_tikz}
\end{figure*}

\emph{Paper organization.} The remainder of the paper is organized as follows. Section \ref{sec:rc} provides a brief overview of reservoir computing. Section \ref{sec:linear} studies linear reservoirs driven by linear signal generators. Section \ref{sec:nonlinear} extends these results to nonlinear reservoir computing systems. Section \ref{sec:design} illustrates the resulting implications for reservoir analysis and design. Section \ref{sec:conclusion} concludes the paper with a summary and an outlook to future research directions.

\section{Reservoir computing}\label{sec:rc}
Reservoir computing exploits the dynamics of a system, the reservoir, to generate features from a signal of interest, while learning only a linear readout of its state \cite{jaeger_echo_2001,maass_real_2002}, as illustrated in {Fig.\,\ref{fig:reservoir_tikz}}. 
The underlying premise is that sufficiently rich reservoir dynamics provide a representation of the input history from which the target can be approximated by a linear readout.  During prediction, the reservoir runs autonomously, with the readout fed back to its input to reproduce the signal.

The reservoir, in its most general form, is a dynamical system described by the equation \cite{singh_echo_2025}
\begin{align}
    \dot{x} &= f(x,u),
    \label{eq:general_reservoir}
\end{align}
where ${x(t)\in\R^n}$ is the reservoir state, ${u(t)\in\R^m}$ is the input, and ${f:\R^n\times\R^m\to\R^n}$ is locally Lipschitz. The reservoir output ${\hat{y}(t)\in\R^p}$ is a linear function of the state,
\begin{align}
    \hat{y} &= \Wout x , \quad \Wout\in\R^{p\times n} .
    \label{eq:linear_readout_definition}
\end{align}
Given a training horizon ${T>0}$, an input ${u(t)\in\R^m}$ and a target signal ${y(t)\in\R^p}$, the readout is typically obtained from the least-squares problem
\begin{align}
    \min_{\Wout\in\R^{p\times n}} \int_0^T \|\Wout x - y\|^2 \dd t ,
    \label{eq:least_squares_minimization}
\end{align}
where $\|y\|$ denotes the Euclidean norm of ${y\in\R^p}$.  Regularized variants, such as ridge regression, are also commonly used \cite{lukosevicius_reservoir_2009}. For prediction, the target is commonly chosen as the input itself, \textit{i.e.}, ${y=u}$. 

Throughout the paper, we consider the class $\mathcal U$ of piecewise continuous and bounded inputs ${u(t)\in\R^m}$, which we call \emph{admissible}. More general classes are studied in \cite{grigoryeva_echo_2018}. For ${x_0\in\R^n}$ and ${u\in\mathcal U}$, we denote by $\varphi_t(x_0,u)$ the solution of \eqref{eq:general_reservoir} at time $t$ from ${x(0)=x_0}$ under $u$. For an autonomous system, ${\dot x=f(x)}$, we abbreviate this notation to $\varphi_t(x_0)$. We also make the following standing assumption.
\begin{assumptionlist}
\item[\textup{(SA)}] The reservoir \eqref{eq:general_reservoir} is forward complete, i.e., $\varphi_t(x_0,u)$ is defined for all ${t\in\Rp}$, ${x_0\in\R^n}$, and ${u\in\mathcal U}$.
\end{assumptionlist}
\noindent
Assumption (SA) holds, \textit{e.g.}, if $f$ is globally Lipschitz in $x$ \cite{khalil_nonlinear_2002}, as for linear reservoirs and echo state networks \cite{jaeger_echo_2001}. 

A key requirement of reservoir computing is that the reservoir forget its initial condition when driven by any admissible input, as formalized by the \gls{ESP} \cite{jaeger_echo_2001,yildiz_re-visiting_2012}.
\begin{definition}[Echo state property]\label{def:esp}
The reservoir \eqref{eq:general_reservoir} has the \gls{ESP} if, for every ${u\in\mathcal U}$ and every ${x_0,x_0'\in\R^n}$, there holds
\begin{align}
    \lim_{t\to\infty}\|\varphi_t(x_0,u)-\varphi_t(x_0',u)\| = 0.
    \label{eq:esp}
\end{align}
\end{definition}
\noindent
The \gls{ESP} ensures that the effect of the initial condition vanishes asymptotically, so that, after a transient, the reservoir state is determined by the input history \cite{jaeger_echo_2001,yildiz_re-visiting_2012}. Under additional assumptions, reservoir computing systems satisfying the \gls{ESP} can approximate broad classes of input--output maps \cite{grigoryeva_echo_2018}. If the input is generated by a dynamical system with state $w$, then, under suitable conditions \cite{lu_reservoir_2017,hart_generalised_2024}, the reservoir state becomes asymptotically a function of $w$, that is, there exists a map $\pi$, called a \emph{synchronization map}, such that ${x(t)-\pi(w(t))\to0}$ as ${t\to\infty}$. For linear reservoirs, an integral expression of $\pi$ is given in \cite{hart_generalised_2024}, which Section \ref{sec:nonlinear} relates to \gls{kkl} observers. In the following sections, we specialize this general setup by assuming that the driving signal is generated by an exogenous dynamical system.

\section{Linear reservoir computing}\label{sec:linear}

This section studies linear reservoirs driven by linear signal generators, relating synchronization to Luenberger observers and output regulation, and readout training to frequency-domain interpolation.

\subsection{Problem setup}\label{sec:linear_setup}
In the linear setting, the general reservoir \eqref{eq:general_reservoir} reduces to
\begin{equation}
    \dot{x} = Fx +Gu,
    \label{eq:linear_reservoir}
\end{equation}
where ${x(t)\in\R^n}$, ${u(t)\in\R^m}$, ${F\in\R^{n\times n}}$, and ${G\in\R^{n\times m}}$. 
We will need the following assumptions for system \eqref{eq:linear_reservoir}.
\begin{assumptionlist}
\assumption{asmp:linear_hurwitz} The matrix $F$ is Hurwitz.
\assumption{asmp:linear_controllability} The pair $(F,G)$ is controllable.
\end{assumptionlist}
\noindent 
Assumption \A{asmp:linear_hurwitz} ensures that \eqref{eq:linear_reservoir} satisfies the \gls{ESP}. Indeed, for every input ${u\in\mathcal U}$, two solutions of \eqref{eq:linear_reservoir} driven by $u$ satisfy
\begin{align*}
    x(t)-x'(t) &= e^{Ft}\bigl(x(0)-x'(0)\bigr),
\end{align*}
which implies \eqref{eq:esp}. Assumption \A{asmp:linear_controllability} ensures that every reservoir mode is excited by the input, and is needed later to characterize the closed-loop modes left free by training.

We model the temporal data presented to the reservoir as signals generated by an autonomous dynamical system, referred to as the \emph{exosystem} or \emph{signal generator} in output regulation \cite{francis_internal_1976,isidori_nonlinear_1995}, observer design \cite{luenberger_introduction_1971}, and model reduction \cite{astolfi_model_2010}. We consider the linear exosystem
\begin{equation}
    \dot{w} = Sw, \quad v=Lw, \quad y=Qw,
    \label{eq:linear_exosystem}
\end{equation}
where ${w(t)\in\R^{\nu}}$ is the state, ${v(t)\in\R^m}$ is the generated signal, ${y(t)\in\R^p}$ is the target, and ${S\in\R^{\nu\times\nu}}$, ${L\in\R^{m\times\nu}}$, ${Q\in\R^{p\times\nu}}$. We make the following assumptions.
\begin{assumptionlist}
\assumption{asmp:linear_neutral} The matrix $S$ has distinct eigenvalues and $\sigma(S)\subset i\mathbb{R}$.
\assumption{asmp:linear_observability} The pair $(L,S)$ is observable.
\end{assumptionlist}
\noindent 
Assumption \A{asmp:linear_neutral}  implies that $S$ is diagonalizable over $\mathbb{C}$ and that every trajectory of \eqref{eq:linear_exosystem} is bounded and consists of a finite combination of constant and oscillatory modes. Consequently, the generated signals ${v=Lw}$ and ${y=Qw}$ are finite sums of constants and sinusoids. Assumption \A{asmp:linear_observability} excludes exosystem modes that are invisible in the generated signal $v$. In the prediction setting, the target coincides with the generated signal, that is, ${Q=L}$ and ${p=m}$.
Finally, we assume that the exosystem drives the reservoir through the interconnection defined by the equation
\begin{align}
    u &= v.
    \label{eq:interconnection}
\end{align}
\subsection{Connections to output regulation and observer design}\label{sec:linear_observer}
For the reservoir \eqref{eq:linear_reservoir} driven by the exosystem \eqref{eq:linear_exosystem} through the interconnection \eqref{eq:interconnection}, the goal of readout training is to find a readout ${W_{\mathrm{out}}\in\R^{p\times n}}$ minimizing the least-squares criterion \eqref{eq:least_squares_minimization}, whose minimum is zero if and only if some readout regulates to zero, on ${[0,T]}$, the readout error
\begin{align}
    e &= \hat{y}-y=W_{\mathrm{out}}x-Qw.
    \label{eq:linear_readout_error}
\end{align}
To connect readout training with output regulation and observer design, we first characterize the steady-state response of the interconnected system \eqrefrange{eq:linear_reservoir}{eq:interconnection}.
\begin{lemma}[Invariant subspace]\label{lem:linear_invariant_map}
Consider the reservoir \eqref{eq:linear_reservoir} driven by the exosystem \eqref{eq:linear_exosystem} through the interconnection \eqref{eq:interconnection}. Assume \A{asmp:linear_hurwitz} and \A{asmp:linear_neutral} hold. Then the Sylvester equation
\begin{align}
    F\Pi+GL &= \Pi S
    \label{eq:sylvester}
\end{align}
has a unique solution ${\Pi\in\mathbb{R}^{n\times\nu}}$, and the subspace
\begin{align}
    \mathcal{V} &= \left\{(w,x)\in\mathbb{R}^{\nu}\times\mathbb{R}^n \,\middle|\, x=\Pi w\right\}
    \label{eq:invariant_subspace}
\end{align}
is invariant under the dynamics of the system \eqrefrange{eq:linear_reservoir}{eq:interconnection}.
\end{lemma}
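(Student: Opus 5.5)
The plan has two parts: existence and uniqueness for the Sylvester equation \eqref{eq:sylvester}, then a direct verification that $\mathcal V$ is invariant.

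\emph{Existence and uniqueness.} I would vectorize \eqref{eq:sylvester}. Rewriting it as $F\Pi-\Pi S=-GL$ and applying $\operatorname{vec}$ gives the linear system
\begin{align*}
\bigl(I_\nu\otimes F - S^\top\otimes I_n\bigr)\operatorname{vec}(\Pi) = -\operatorname{vec}(GL).
\end{align*}
The eigenvalues of the Kronecker-sum matrix on the left are $\lambda_i(F)-\mu_j(S)$. By \A{asmp:linear_hurwitz}, every $\lambda_i(F)$ satisfies $\Re\lambda_i(F)<0$. By \A{asmp:linear_neutral}, every $\mu_j(S)$ satisfies $\Re\mu_j(S)=0$. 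Hence $\sigma(F)\cap\sigma(S)=\emptyset$, no difference $\lambda_i-\mu_j$ vanishes, and the matrix is invertible. This gives a unique complex solution. Because the coefficient matrix and the right-hand side are real, the solution is real, so $\Pi\in\R^{n\times\nu}$.

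An alternative route is the explicit integral formula $\Pi=\int_0^\infty e^{Ft}GLe^{-St}\dd t$. It converges because $e^{Ft}$ decays exponentially while $e^{-St}$ stays bounded; the latter holds since $S$ is diagonalizable with imaginary spectrum. One checks it solves \eqref{eq:sylvester} by differentiating $e^{Ft}GLe^{-St}$. This formula also links to the integral expression in \cite{hart_generalised_2024}, but the Kronecker argument already settles uniqueness.

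\emph{Invariance.} The closed system \eqrefrange{eq:linear_reservoir}{eq:interconnection} is $\dot w=Sw$, $\dot x=Fx+GLw$. Define the off-manifold coordinate $z=x-\Pi w$. Using \eqref{eq:sylvester},
\begin{align*}
\dot z = Fx+GLw-\Pi Sw = Fx+GLw-(F\Pi+GL)w = Fz.
\end{align*}
So if $z(0)=0$, then $z(t)=e^{Ft}z(0)=0$ for all $t$. That means every trajectory starting in $\mathcal V$ stays in $\mathcal V$, for both forward and backward time.

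\emph{Main obstacle.} There is no real obstacle; the only delicate point is the spectral-separation argument for uniqueness and realness of $\Pi$. Note that the distinctness of the eigenvalues of $S$ is not needed here. The computation for $z$ also shows that $\mathcal V$ is attractive, with $x(t)-\Pi w(t)=e^{Ft}(x(0)-\Pi w(0))\to0$, which I would remark on as the steady-state interpretation.
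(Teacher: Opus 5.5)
Your proposal is correct and follows the paper's proof. Uniqueness comes from $\sigma(F)\cap\sigma(S)=\varnothing$ under \A{asmp:linear_hurwitz} and \A{asmp:linear_neutral}; your Kronecker-sum argument, including the check that $\Pi$ is real, just spells out the standard fact the paper cites. Invariance comes from the error $x-\Pi w$ obeying $\dot z=Fz$, and your integral formula is the one the paper proves in Proposition~\ref{prop:luenberger}.
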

\begin{proof}
By \A{asmp:linear_hurwitz} and \A{asmp:linear_neutral}, ${\sigma(F)\cap\sigma(S)=\varnothing}$, hence \eqref{eq:sylvester} has a unique solution \cite{antoulas_approximation_2005}. Let $(w,x)$ be a solution of \eqrefrange{eq:linear_reservoir}{eq:interconnection} and let ${\varepsilon=x-\Pi w}$. Then \eqref{eq:sylvester} implies
\begin{align*}
    \dot\varepsilon = Fx+GLw-\Pi Sw = F\varepsilon+(F\Pi+GL-\Pi S)w = F\varepsilon .
\end{align*}
If ${(w(0),x(0))\in\mathcal V}$, then ${\varepsilon(0)=0}$, so ${\varepsilon(t)=0}$ for all ${t\geq0}$ by uniqueness of solutions of ${\dot\varepsilon=F\varepsilon}$, \textit{i.e.}, $\mathcal V$ is invariant.
\end{proof}
\noindent
Lemma \ref{lem:linear_invariant_map} shows that the interconnected system \eqrefrange{eq:linear_reservoir}{eq:interconnection} admits an invariant subspace on which the reservoir state is a linear function of the exosystem state. Moreover, \eqref{eq:sylvester} implies that $\range\Pi$ is a controlled invariant for the pair $(F,G)$ \cite{wonham_linear_1985}. The Sylvester equation \eqref{eq:sylvester} arises naturally in output regulation \cite{wonham_linear_1985} and the design of Luenberger observers \cite{luenberger_introduction_1971}. 

To analyze the training dynamics, we define the error
\begin{align}
    \varepsilon &= x-\Pi w .
    \label{eq:manifold_error_dynamics}
\end{align}
\begin{proposition}[Luenberger synchronization map]\label{prop:luenberger}
Consider the linear reservoir \eqref{eq:linear_reservoir} driven by the linear exosystem \eqref{eq:linear_exosystem} through the interconnection \eqref{eq:interconnection}. Assume \A{asmp:linear_hurwitz} and \A{asmp:linear_neutral} hold. Then the following statements hold.
{\renewcommand{\theenumi}{\roman{enumi}}\renewcommand{\labelenumi}{(\theenumi)}\setlength{\leftmargini}{1.75em}
\begin{enumerate}
    \item\label{item:pi_integral} The unique solution of \eqref{eq:sylvester} is
    \begin{align}
        \Pi &= \int_0^\infty e^{Ft}GL e^{-St}\dd t.
        \label{eq:pi_integral}
    \end{align}
    \item\label{item:sync_attractive} Along every solution of \eqrefrange{eq:linear_reservoir}{eq:interconnection}, the error \eqref{eq:manifold_error_dynamics} satisfies ${\dot{\varepsilon} = F\varepsilon}$, and $\mathcal{V}$ is globally exponentially attractive.
        \item\label{item:exact_readout} There exists ${\Wout\in\R^{p\times n}}$ such that
    \begin{align}
        \Wout\Pi &= Q
        \label{eq:perfect_readout}
    \end{align}
    if and only if ${\rank\begin{bsmallmatrix}\Pi\\ Q\end{bsmallmatrix}=\rank\Pi}$. Any such $\Wout$ gives a readout error \eqref{eq:linear_readout_error} which converges exponentially to zero.
\end{enumerate}
}
\end{proposition}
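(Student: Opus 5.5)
For part (i), the plan is to show that the integral in \eqref{eq:pi_integral} converges and solves \eqref{eq:sylvester}; uniqueness then follows from Lemma \ref{lem:linear_invariant_map}.

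\emph{Convergence.} By \A{asmp:linear_hurwitz} there are $c,\alpha>0$ with $\|e^{Ft}\|\le c e^{-\alpha t}$. By \A{asmp:linear_neutral}, $S$ is diagonalizable over $\mathbb{C}$ with purely imaginary eigenvalues, so $\|e^{-St}\|\le \kappa$ for all $t$, where $\kappa$ is the condition number of the diagonalizing matrix. Hence the integrand is bounded by $c\kappa\|GL\|e^{-\alpha t}$, which is integrable, and $\Pi$ is well defined.

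\emph{Sylvester equation.} I will differentiate the integrand:
\begin{align*}
\tfrac{\mathrm d}{\mathrm dt}\bigl(e^{Ft}GLe^{-St}\bigr) = Fe^{Ft}GLe^{-St}-e^{Ft}GLe^{-St}S .
\end{align*}
Integrating over $[0,\infty)$, the left side gives $\lim_{t\to\infty}e^{Ft}GLe^{-St}-GL=-GL$, since the bound above forces the limit to vanish. The right side gives $F\Pi-\Pi S$, because $F$ and $S$ can be pulled out of the convergent integral. Therefore $F\Pi+GL=\Pi S$. Realness of $\Pi$ follows because all the matrices involved are real.

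For part (ii), the computation in the proof of Lemma \ref{lem:linear_invariant_map} already gives $\dot\varepsilon=F\varepsilon$ along every solution, not only those starting in $\mathcal V$. Hence $\varepsilon(t)=e^{Ft}\varepsilon(0)$ and $\|\varepsilon(t)\|\le ce^{-\alpha t}\|\varepsilon(0)\|$. The distance from $(w,x)$ to $\mathcal V$ is bounded by $\|x-\Pi w\|=\|\varepsilon\|$, since $(w,\Pi w)\in\mathcal V$. This yields global exponential attractivity of $\mathcal V$.

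For part (iii), I will transpose \eqref{eq:perfect_readout} to $\Pi^\top\Wout^\top=Q^\top$. This linear system is solvable if and only if $\range Q^\top\subseteq\range\Pi^\top$, i.e.\ if and only if the row space of $Q$ lies in the row space of $\Pi$. That condition is equivalent to $\rank\begin{bsmallmatrix}\Pi\\ Q\end{bsmallmatrix}=\rank\Pi$ (the Rouch\'e--Capelli criterion). For any such $\Wout$, I will write
\begin{align*}
e=\Wout x-Qw=\Wout(\Pi w+\varepsilon)-Qw=\Wout\varepsilon ,
\end{align*}
so that $\|e(t)\|\le\|\Wout\|ce^{-\alpha t}\|\varepsilon(0)\|$ by part (ii), which is exponential convergence to zero.

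The only mildly delicate step is the uniform bound on $e^{-St}$, which is exactly where distinctness of the eigenvalues (hence diagonalizability) in \A{asmp:linear_neutral} is used. Without it, Jordan blocks at imaginary eigenvalues would give polynomial growth; that growth is still dominated by $e^{-\alpha t}$, so the argument survives in any case.
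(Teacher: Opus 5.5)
Your proposal is correct and follows the paper's proof essentially step for step. You show that the absolutely convergent integral solves \eqref{eq:sylvester} by integrating the derivative of $e^{Ft}GLe^{-St}$, take uniqueness from Lemma~\ref{lem:linear_invariant_map}, reuse $\dot\varepsilon=F\varepsilon$ with the exponential bound on $e^{Ft}$, and reduce (iii) to row-space inclusion together with $e=\Wout\varepsilon$. Your extra details (the uniform bound on $e^{-St}$, the vanishing boundary term, and the estimate $\operatorname{dist}\bigl((w,x),\mathcal V\bigr)\le\|\varepsilon\|$) only make explicit steps that the paper leaves implicit.
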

\begin{proof}
(\ref{item:pi_integral}) By \A{asmp:linear_hurwitz} and \A{asmp:linear_neutral}, the integral in \eqref{eq:pi_integral} is absolutely convergent. Denoting its value by $\tilde\Pi$, we obtain
\begin{align*}
    F\tilde\Pi-\tilde\Pi S
    &= \int_0^\infty \frac{\dd}{\dd t}\left(e^{Ft}GL e^{-St}\right)\dd t = -GL,
\end{align*}
so $\tilde\Pi$ solves \eqref{eq:sylvester} and ${\tilde\Pi=\Pi}$ by Lemma \ref{lem:linear_invariant_map}.

(\ref{item:sync_attractive}) The identity ${\dot\varepsilon=F\varepsilon}$ is established in the proof of Lemma \ref{lem:linear_invariant_map}. By \A{asmp:linear_hurwitz}, there exist ${M\geq1}$ and ${\alpha>0}$ such that ${\|e^{Ft}\|\leq Me^{-\alpha t}}$, so ${\|\varepsilon(t)\|\leq Me^{-\alpha t}\|\varepsilon(0)\|}$ for all ${t\geq0}$.

(\ref{item:exact_readout}) The rank condition holds if and only if the row space of $Q$ is contained in the row space of $\Pi$, \textit{i.e.}, if and only if ${Q=\Wout\Pi}$ for some ${\Wout\in\R^{p\times n}}$. For every $\Wout$ satisfying \eqref{eq:perfect_readout},
$    e=\Wout x-Qw=\Wout(x-\Pi w)=\Wout\varepsilon,$
which converges exponentially to zero by (\ref{item:sync_attractive}).
\end{proof}
\noindent
Proposition \ref{prop:luenberger} shows that, after a transient, the reservoir state is given by the synchronization map ${w\mapsto\Pi w}$ \cite{lu_reservoir_2017,hart_generalised_2024}, and that the reservoir may be interpreted as a Luenberger observer \cite{luenberger_introduction_1971} of $Qw$, with estimate $\hat y$, whenever \eqref{eq:perfect_readout} holds. We call \eqref{eq:perfect_readout} the \emph{exact readout condition}. Together with \eqref{eq:sylvester}, it defines the regulator equations \cite{francis_internal_1976,wonham_linear_1985} for driving the readout error \eqref{eq:linear_readout_error} to zero, with the input ${u=Lw}$ prescribed by the interconnection \eqref{eq:interconnection} and the readout $\Wout$ as the design variable. The condition  ${\rank\begin{bsmallmatrix}\Pi\\ Q\end{bsmallmatrix}=\rank\Pi}$ is equivalent to
${\ker\Pi\subseteq\ker Q}$, and holds if ${\rank\Pi=\nu}$, which is the case for ${m=1}$ and ${n\geq\nu}$ under assumptions \A{asmp:linear_hurwitz}--\A{asmp:linear_observability}  \cite{souza_controllability_1981}.

\subsection{Reservoir prediction}\label{subsec:forecasting}
Once training is complete, the reservoir may be used to predict the target signal by feeding its output back to its input \cite{jaeger_echo_2001}. In the prediction setting, with ${Q=L}$ and ${p=m}$, this defines the feedback law
\begin{align}
    u &= W_{\mathrm{out}}x,
    \label{eq:readout_feedback}
\end{align}
which gives the closed-loop dynamics
\begin{align}
    \dot{x} &= (F+GW_{\mathrm{out}})x.
    \label{eq:loop_closure}
\end{align}
\begin{proposition}[Closed-loop internal model]\label{prop:free_modes}
Consider the closed-loop reservoir \eqref{eq:loop_closure}, with ${p=m}$ and ${Q=L}$, where $\Wout$ satisfies the exact readout condition \eqref{eq:perfect_readout}. Assume \A{asmp:linear_hurwitz}, \A{asmp:linear_neutral}, and \A{asmp:linear_observability} hold. Then
\begin{align}
    (F+G\Wout)\Pi &= \Pi S,
    \label{eq:closed_loop_intertwining}
\end{align}
and $\Pi$ is injective, so that ${\sigma(S)\subseteq\sigma(F+G\Wout)}$.
\end{proposition}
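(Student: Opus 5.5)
The plan has three steps: derive the intertwining identity \eqref{eq:closed_loop_intertwining} from the Sylvester equation and the exact readout condition, prove that $\Pi$ is injective by an observability argument, and then read off the spectral inclusion.

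\emph{Intertwining identity.} Since ${Q=L}$, the exact readout condition \eqref{eq:perfect_readout} reads ${\Wout\Pi=L}$. Substituting ${GL=G\Wout\Pi}$ into the Sylvester equation \eqref{eq:sylvester} gives
\begin{align*}
(F+G\Wout)\Pi=F\Pi+G\Wout\Pi=F\Pi+GL=\Pi S,
\end{align*}
which is \eqref{eq:closed_loop_intertwining}. By Lemma \ref{lem:linear_invariant_map}, $\Pi$ exists and is unique under \A{asmp:linear_hurwitz} and \A{asmp:linear_neutral}, so this step uses nothing beyond those results.

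\emph{Injectivity of $\Pi$.} I would show that $\ker\Pi$ is $S$-invariant and contained in $\ker L$. Take ${w\in\ker\Pi}$. Then ${Lw=\Wout\Pi w=0}$, so ${w\in\ker L}$. Moreover, \eqref{eq:closed_loop_intertwining} gives ${\Pi Sw=(F+G\Wout)\Pi w=0}$, so ${Sw\in\ker\Pi}$. Thus $\ker\Pi$ is an $S$-invariant subspace contained in $\ker L$.

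Every $S$-invariant subspace in $\ker L$ lies in the unobservable subspace $\bigcap_k\ker LS^k$. By \A{asmp:linear_observability} that subspace is trivial, so ${\ker\Pi=\{0\}}$. Alternatively, if $\ker\Pi\neq\{0\}$, it contains a complex eigenvector $w$ of $S$ with ${Lw=0}$, which contradicts the \gls{pbh} test. The main subtlety is that $\Wout\Pi=L$ is what links $\ker\Pi$ to $\ker L$; without the exact readout condition this step fails. Nothing beyond the observability assumption should be needed here.

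\emph{Spectral inclusion.} By \A{asmp:linear_neutral}, for each ${\lambda\in\sigma(S)}$ there is an eigenvector ${w\neq0}$ (possibly complex) with ${Sw=\lambda w}$. Then \eqref{eq:closed_loop_intertwining} gives
\begin{align*}
(F+G\Wout)\Pi w=\Pi Sw=\lambda\,\Pi w .
\end{align*}
Since $\Pi$ is injective, ${\Pi w\neq0}$, so $\lambda$ is an eigenvalue of $F+G\Wout$ with eigenvector $\Pi w$. Equivalently, $\range\Pi$ is an $(F+G\Wout)$-invariant subspace on which the restriction is similar to $S$. Hence ${\sigma(S)\subseteq\sigma(F+G\Wout)}$, which completes the argument.
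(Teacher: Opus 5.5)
Your proposal is correct and follows essentially the same route as the paper. Both arguments substitute $GL=G\Wout\Pi$ into \eqref{eq:sylvester} to get \eqref{eq:closed_loop_intertwining}, show that $\ker\Pi$ is an $S$-invariant subspace of $\ker L$ and hence trivial by \A{asmp:linear_observability}, and then read off $\sigma(S)\subseteq\sigma(F+G\Wout)$ from the restriction of $F+G\Wout$ to $\range\Pi$; the only cosmetic differences are that you obtain $S$-invariance from \eqref{eq:closed_loop_intertwining} rather than from \eqref{eq:sylvester} with $Lv=0$, and that you also phrase the last step via eigenvectors.
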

\begin{proof}
Since ${Q=L}$, \eqref{eq:perfect_readout} and \eqref{eq:sylvester} give $${(F+G\Wout)\Pi=F\Pi+GL=\Pi S}.$$ If ${\Pi v=0}$, then ${Lv=\Wout\Pi v=0}$ and $${\Pi Sv=F\Pi v+GLv=0}$$ by \eqref{eq:sylvester}, so $\ker\Pi$ is an $S$-invariant subspace of $\ker L$, hence ${\ker\Pi=\{0\}}$ by \A{asmp:linear_observability}, that is, ${\rank\Pi=\nu}$. Thus, by \eqref{eq:closed_loop_intertwining}, the restriction of ${F+G\Wout}$ to $\range\Pi$ is similar to $S$, and ${\sigma(S)\subseteq\sigma(F+G\Wout)}$.
\end{proof}
\noindent
Proposition \ref{prop:free_modes} states that the subspace $\range\Pi$ is invariant under ${F+G\Wout}$, \textit{i.e.}, the readout feedback \eqref{eq:readout_feedback} is a \textit{friend} of $\range\Pi$ in the language of \cite{wonham_linear_1985}. Exact training thus embeds the exosystem in the closed loop, which reproduces every exosystem trajectory together with its target, and the modes on $\range\Pi$ form an internal model of the signal generator \cite{francis_internal_1976,wonham_linear_1985}. In particular, ${F+G\Wout}$ is not Hurwitz.

\begin{remark}[Implications for reservoir prediction]\label{rem:transverse}
Proposition \ref{prop:free_modes} implies that the closed-loop modes outside $\range\Pi$ are not fixed by training. To see this, let $\Pi^\dagger$ be the Moore--Penrose pseudoinverse of $\Pi$, and let $N$ have orthonormal columns spanning $\ker\Pi^\top$. In the coordinates ${(\Pi^\dagger x,N^\top x)}$, the matrix ${F+G\Wout}$ is block upper triangular, with diagonal blocks $S$ and ${N^\top(F+G\Wout)N}$, whose ${n-\nu}$ eigenvalues we call \emph{transverse}. Under \A{asmp:linear_controllability}, an application of the \gls{pbh} test \cite{wonham_linear_1985} shows that, by appropriately choosing $\Wout$ among the solutions of \eqref{eq:perfect_readout}, the transverse eigenvalues can be assigned to any multiset of ${n-\nu}$ complex numbers closed under conjugation \cite{wonham_linear_1985}. The stability of the predictor~\eqref{eq:linear_readout_definition}, \eqref{eq:loop_closure} is thus decided not by the target, but by the transverse eigenvalues, which depend on the transients and noise in the data entering \eqref{eq:least_squares_minimization}, and on regularization, if any, as observed for trained predictors in~\cite{lu_attractor_2018}.
\end{remark}
\subsection{A frequency-domain interpretation}\label{sec:linear_moments}
The Sylvester equation \eqref{eq:sylvester} also appears in model reduction by moment matching \cite{antoulas_approximation_2005,astolfi_model_2010}, where it characterizes the moments of a linear system at the eigenvalues of a signal generator. This connection allows us to interpret readout training as a frequency-domain interpolation problem.

Consider the linear reservoir \eqref{eq:linear_reservoir} with  output   \eqref{eq:linear_readout_definition}. Its transfer matrix from $u$ to $\hat{y}$ is
\begin{align}
    H(s) &= W_{\mathrm{out}}(sI_n-F)^{-1}G.
    \label{eq:transfer_function}
\end{align}
Following \cite{antoulas_approximation_2005,astolfi_model_2010}, the moment of order ${k\in\mathbb{N}}$ of the system at ${s^\star\in\mathbb{C}\setminus\sigma(F)}$ is
\begin{align*}
    \eta_k(s^\star) &= \frac{(-1)^k}{k!}\frac{\dd^kH}{\dd s^k}(s^\star),
\end{align*}
with the convention that ${\eta_0(s^\star)=H(s^\star)}$. Let $s_1,\ldots,s_\nu$ be the eigenvalues of $S$, with eigenvectors ${v_1,\ldots,v_\nu\in\mathbb{C}^{\nu}}$, and define ${V = [\,v_1\ \cdots\ v_\nu\,]}$. Under assumption \A{asmp:linear_neutral}, $V$ is invertible. Moreover, the \gls{pbh} observability test \cite{wonham_linear_1985} and \A{asmp:linear_observability} imply ${Lv_i\neq0}$ for ${i=1,\ldots,\nu}$. In the language of rational interpolation \cite{antoulas_approximation_2005}, each vector $Lv_i$ defines a nonzero tangential interpolation direction.
\begin{proposition}[Moment matching]\label{prop:moment_matching}
Consider the reservoir \eqref{eq:linear_reservoir} with output \eqref{eq:linear_readout_definition} and transfer matrix \eqref{eq:transfer_function}. Assume \A{asmp:linear_hurwitz} and \A{asmp:linear_neutral} hold. Then the following statements hold.
{\renewcommand{\labelenumi}{(\roman{enumi})}\setlength{\leftmargini}{2.5em}
\begin{enumerate}
    \item The unique solution $\Pi$ of \eqref{eq:sylvester} satisfies
    \begin{align}
        W_{\mathrm{out}}\Pi V
        =
        \begin{bmatrix}
            \eta_0(s_1)Lv_1 &
            \cdots &
            \eta_0(s_\nu)Lv_\nu
        \end{bmatrix}.
        \label{eq:tangential_moments}
    \end{align}
    \item The exact readout condition \eqref{eq:perfect_readout} holds if and only if
    \begin{align}
        \eta_0(s_i)Lv_i = Qv_i, \qquad i=1,\ldots,\nu.
        \label{eq:interpolation_conditions}
    \end{align}
\end{enumerate}
}
\end{proposition}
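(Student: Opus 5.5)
The plan is to diagonalize the exosystem and read the Sylvester equation \eqref{eq:sylvester} column by column in the eigenbasis of $S$. By \A{asmp:linear_neutral} and Lemma \ref{lem:linear_invariant_map}, $\Pi$ is the unique solution of \eqref{eq:sylvester}. Multiplying \eqref{eq:sylvester} on the right by $v_i$ and using $Sv_i=s_iv_i$ gives
\begin{align*}
    F\Pi v_i + GLv_i = s_i \Pi v_i, \qquad\text{that is,}\qquad (s_iI_n-F)\Pi v_i = GLv_i .
\end{align*}
Since $F$ is Hurwitz by \A{asmp:linear_hurwitz} and $s_i\in i\mathbb{R}$, we have $s_i\notin\sigma(F)$, so $s_iI_n-F$ is invertible. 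Hence
\begin{align*}
    \Pi v_i=(s_iI_n-F)^{-1}GLv_i .
\end{align*}

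For statement (i), left-multiplying by $\Wout$ and recalling \eqref{eq:transfer_function} yields
\begin{align*}
    \Wout\Pi v_i=\Wout(s_iI_n-F)^{-1}GLv_i=H(s_i)Lv_i=\eta_0(s_i)Lv_i .
\end{align*}
Stacking these identities for $i=1,\ldots,\nu$ as the columns of $\Wout\Pi V$ gives \eqref{eq:tangential_moments}. This step is the computational core, but it is routine once the eigenvector substitution is made.

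For statement (ii), note that $V$ is invertible because $S$ has distinct eigenvalues. Therefore $\Wout\Pi=Q$ holds if and only if $\Wout\Pi V=QV$, that is, if and only if $\Wout\Pi v_i=Qv_i$ for every $i$. By (i), this is exactly \eqref{eq:interpolation_conditions}.

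The only point needing care is the passage between real and complex matrices. The $v_i$ are complex, while $\Wout$, $\Pi$ and $Q$ are real. The equivalence $\Wout\Pi=Q\iff\Wout\Pi V=QV$ holds over $\mathbb{C}$, and a real identity is implied by the complex one, so no issue arises. One should also state that (ii) is read for a fixed real $\Wout$ defining $H$; this is the interpretation consistent with (i).
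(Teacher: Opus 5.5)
Your proposal is correct and follows essentially the same route as the paper's proof. Like the paper, you right-multiply the Sylvester equation by $v_i$, invert $s_iI_n-F$ to obtain $\Wout\Pi v_i=\eta_0(s_i)Lv_i$, and then use the invertibility of $V$ to get (ii). Your added remark on passing from the complex identity $\Wout\Pi V=QV$ back to the real identity $\Wout\Pi=Q$ is a harmless clarification that the paper leaves implicit.
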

\begin{proof}
(i) Let $Sv_i=s_i v_i$. Right-multiplying \eqref{eq:sylvester} by $v_i$ gives ${(s_iI_n-F)\Pi v_i=GLv_i}$. By \A{asmp:linear_hurwitz} and \A{asmp:linear_neutral}, ${s_i\notin\sigma(F)}$, hence ${W_{\mathrm{out}}\Pi v_i=W_{\mathrm{out}}(s_iI_n-F)^{-1}GLv_i=\eta_0(s_i)Lv_i}$ for ${i=1,\ldots,\nu}$, which is \eqref{eq:tangential_moments}.

(ii) Since $V$ is invertible, \eqref{eq:perfect_readout} holds if and only if ${W_{\mathrm{out}}\Pi V=QV}$, \textit{i.e.}, by \eqref{eq:tangential_moments}, if and only if \eqref{eq:interpolation_conditions} holds.
\end{proof}
\noindent
Proposition \ref{prop:moment_matching} shows that exact readout training is equivalent to tangential interpolation of the reservoir transfer matrix at the eigenvalues of the exosystem, with interpolation directions $Lv_i$ and values $Qv_i$ \cite{antoulas_approximation_2005}. For single-input reservoirs, \textit{i.e.}, ${m=1}$, each $Lv_i$ is a nonzero scalar. Defining ${D = \operatorname{diag}(Lv_1,\ldots,Lv_\nu)}$ and ${\Gamma=VD^{-1}}$, both $D$ and $\Gamma$ are invertible, and \eqref{eq:tangential_moments} becomes
\begin{align*}
    W_{\mathrm{out}}\Pi\Gamma
    =
    \begin{bmatrix}
        \eta_0(s_1) & \cdots & \eta_0(s_\nu)
    \end{bmatrix}.
\end{align*}
Hence, for ${m=1}$, the moments of order zero of the trained reservoir \eqref{eq:linear_reservoir}, \eqref{eq:linear_readout_definition} at the eigenvalues of the exosystem \eqref{eq:linear_exosystem} are in one-to-one correspondence with the entries of ${W_{\mathrm{out}}\Pi}$ \cite[Lemma~1]{astolfi_model_2010}. Linear reservoir computing may thus be interpreted as enforcing moment matching in a data-driven manner, in line with \cite{scarciotti_data_2017,padoan_geometric_2017,padoan_least_2023}.

\section{Nonlinear reservoir computing}\label{sec:nonlinear}
This section extends the preceding results to nonlinear reservoirs and signal generators, using tools from output regulation \cite{isidori_nonlinear_1995} and  observer design \cite{kazantzis_nonlinear_1997,andrieu_existence_2006}. 
\subsection{Problem setup}
Consider the reservoir \eqref{eq:general_reservoir} and assume that the driving signal is generated by the nonlinear exosystem
\begin{equation}\label{eq:nonlinear_exosystem}
    \dot{w} = s(w), \qquad v = \ell(w), \qquad y = q(w),
\end{equation}
where ${w(t)\in\R^\nu}$ is the state, ${v(t)\in\R^m}$ is the generated signal, ${y(t)\in\R^p}$ is the target, ${s:\R^\nu\to\R^\nu}$, ${\ell:\R^\nu\to\R^m}$, and ${q:\R^\nu\to\R^p}$, and the reservoir is driven by the exosystem \eqref{eq:nonlinear_exosystem} through the interconnection \eqref{eq:interconnection}. As in the linear case, the goal of readout training is to find a readout ${W_{\mathrm{out}}\in\R^{p\times n}}$ minimizing \eqref{eq:least_squares_minimization}, whose minimum is zero if and only if some readout regulates to zero, on ${[0,T]}$, the readout error
\begin{equation}\label{eq:nonlinear_readout_error}
e = \hat{y} - y = \Wout x - q(w).
\end{equation}
We first characterize the steady-state response of the interconnected system \eqref{eq:general_reservoir}, \eqref{eq:interconnection}, \eqref{eq:nonlinear_exosystem}, which shows the existence of a synchronization map \cite{lu_reservoir_2017,hart_generalised_2024}. This requires assumptions on regularity, stability, and neutral stability that are standard in nonlinear output regulation \cite{isidori_nonlinear_1995}, and are formulated in terms of the linearizations of \eqref{eq:general_reservoir} and \eqref{eq:nonlinear_exosystem} at the origin, defined as
\begin{align}
    &F_0 = \frac{\partial f}{\partial x} (0,0),
    &&G_0 = \frac{\partial f}{\partial u} (0,0),\nonumber\\
    &S_0 = \frac{\partial s}{\partial w}(0),
    &&L_0 = \frac{\partial \ell}{\partial w}(0). \label{eq:nonlinear_linearization}
\end{align}
We will need the following assumptions.
\begin{assumptionlist}
\assumption{asmp:nonlinear_regularity} The map $f$ is of class $C^2$ on a neighborhood of $(0,0)$, with ${f(0,0)=0}$, and there exists an open neighborhood ${\mathcal O\subseteq\R^\nu}$ of ${w=0}$ on which $s$ and $\ell$ are of class $C^2$ and $q$ is continuous, with ${s(0)=0}$ and ${\ell(0)=0}$.
\assumption{asmp:nonlinear_stability} The matrix $F_0$ is Hurwitz.
\assumption{asmp:nonlinear_neutral_stability} The equilibrium ${w=0}$ of ${\dot w=s(w)}$ is stable in the sense of Lyapunov, every ${w\in\mathcal O}$ is positively Poisson stable\footnote{A point $w_0$ is positively Poisson stable for ${\dot w=s(w)}$ if ${\varphi_{t_k}(w_0)\to w_0}$ along some sequence ${t_k\to\infty}$ \cite{isidori_nonlinear_1995}.}, and ${\sigma(S_0) \subseteq i\mathbb R}$.
\end{assumptionlist}
\noindent
Assumption \A{asmp:nonlinear_regularity} makes the linearizations \eqref{eq:nonlinear_linearization} well defined and the origin an equilibrium of the interconnected system. Assumption \A{asmp:nonlinear_stability} is the nonlinear counterpart of \A{asmp:linear_hurwitz}. By a Lyapunov argument (see, \textit{e.g.}, \cite{khalil_nonlinear_2002}), any two solutions of \eqref{eq:general_reservoir} starting sufficiently close to the origin and driven by the same admissible input ${u\in\mathcal U}$ of sufficiently small amplitude converge to each other exponentially, \textit{i.e.}, the reservoir \eqref{eq:general_reservoir} has the \gls{ESP} locally. Assumption \A{asmp:nonlinear_neutral_stability} is the nonlinear counterpart of \A{asmp:linear_neutral}, and ensures that the exosystem \eqref{eq:nonlinear_exosystem} generates bounded signals which keep returning near their initial value, and that its linearization has neither stable nor unstable modes \cite{isidori_nonlinear_1995}.
\subsection{Connections to output regulation and observer design}\label{subsec:nonlinear_observer}
\begin{lemma}[Invariant manifold]\label{lem:nonlinear_synchronization}
Consider the reservoir \eqref{eq:general_reservoir} driven by the exosystem \eqref{eq:nonlinear_exosystem} through the interconnection \eqref{eq:interconnection}. Assume \A{asmp:nonlinear_regularity}--\A{asmp:nonlinear_neutral_stability} hold. Then the invariance equation
\begin{align}
    \frac{\partial\pi}{\partial w}(w)s(w)
    &=
    f(\pi(w), \ell(w))
    \label{eq:nonlinear_invariance}
\end{align}
has a locally unique solution ${\pi:\mathcal O_\pi\to\mathbb R^n}$ of class $C^2$ with ${\pi(0)=0}$, defined on an open neighborhood ${\mathcal O_\pi\subseteq\mathcal O}$ of ${w=0}$, and the manifold
\begin{align}
    \mathcal M
    &=
    \left\{(w,x)\in\mathcal O_\pi\times\mathbb R^n
    \,\middle|\,
    x=\pi(w)\right\}
    \label{eq:nonlinear_invariant_manifold}
\end{align}
is locally invariant under the dynamics of the interconnected system \eqref{eq:general_reservoir}, \eqref{eq:interconnection}, \eqref{eq:nonlinear_exosystem}.
\end{lemma}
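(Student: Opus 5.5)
The plan is to follow the classical center manifold argument from nonlinear output regulation \cite{isidori_nonlinear_1995}, treating the interconnected system \eqref{eq:general_reservoir}, \eqref{eq:interconnection}, \eqref{eq:nonlinear_exosystem} as the autonomous system
\begin{align*}
    \dot w = s(w), \qquad \dot x = f(x,\ell(w)),
\end{align*}
which by \A{asmp:nonlinear_regularity} is of class $C^2$ near $(w,x)=(0,0)$ and has an equilibrium at the origin. Its Jacobian at the origin is block lower triangular,
\begin{align*}
    \begin{bmatrix} S_0 & 0 \\ G_0L_0 & F_0 \end{bmatrix},
\end{align*}
so its spectrum is $\sigma(S_0)\cup\sigma(F_0)$. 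By \A{asmp:nonlinear_neutral_stability}, $\sigma(S_0)\subseteq i\mathbb R$, and by \A{asmp:nonlinear_stability}, $\sigma(F_0)$ lies in the open left half-plane. Hence the center subspace is $\nu$-dimensional and the stable subspace is $n$-dimensional, and there is no unstable subspace. The center subspace is the graph of $x=\Pi_0 w$, where $\Pi_0$ is the unique solution of the linearized Sylvester equation $F_0\Pi_0+G_0L_0=\Pi_0S_0$, which exists by the argument of Lemma \ref{lem:linear_invariant_map} since $\sigma(F_0)\cap\sigma(S_0)=\varnothing$.

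Next, invoke the center manifold theorem. It gives a locally invariant $C^2$ center manifold tangent to this subspace. Because the center subspace is a graph over the $w$-coordinates, the manifold is locally the graph $x=\pi(w)$ of a $C^2$ map on some neighborhood $\mathcal O_\pi\subseteq\mathcal O$ of $w=0$, with $\pi(0)=0$ and $\frac{\partial\pi}{\partial w}(0)=\Pi_0$. Local invariance of $\mathcal M$ then gives \eqref{eq:nonlinear_invariance}. To see this, along a solution on $\mathcal M$ differentiate $x(t)=\pi(w(t))$ to get $f(\pi(w),\ell(w))=\frac{\partial\pi}{\partial w}(w)s(w)$. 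This holds at every point of $\mathcal O_\pi$, since every such point lies on a local solution in $\mathcal M$. Conversely, if $\pi$ solves \eqref{eq:nonlinear_invariance}, then for $\varepsilon=x-\pi(w)$ we have $\dot\varepsilon=f(x,\ell(w))-f(\pi(w),\ell(w))$. This vanishes at $\varepsilon=0$, so local uniqueness of solutions shows that $\mathcal M$ is locally invariant. This mirrors the proof of Lemma \ref{lem:linear_invariant_map}.

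The main obstacle is \emph{uniqueness}, since center manifolds are generally not unique. Here I would use the Poisson stability in \A{asmp:nonlinear_neutral_stability}, as in \cite{isidori_nonlinear_1995}. Suppose $\pi_1,\pi_2$ are two solutions of \eqref{eq:nonlinear_invariance}, and set $\delta(t)=\pi_1(w(t))-\pi_2(w(t))$ along an exosystem trajectory. Then
\begin{align*}
    \dot\delta = f(\pi_1(w),\ell(w))-f(\pi_2(w),\ell(w)).
\end{align*}
The $x$-Jacobian of $f$ is close to the Hurwitz matrix $F_0$ near the origin, so $\delta$ contracts exponentially. This uses a quadratic Lyapunov function for $F_0$, together with Lyapunov stability of $w=0$, which keeps $w(t)$ in a small neighborhood where both $\pi_i$ are defined. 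Thus $\delta(t)\to0$. On the other hand, Poisson stability gives $w(t_k)\to w_0$, so by continuity $\delta(t_k)\to\pi_1(w_0)-\pi_2(w_0)$. Therefore $\pi_1(w_0)=\pi_2(w_0)$ for all $w_0$ in a suitably shrunk neighborhood, which proves local uniqueness. Care is needed to choose the neighborhood so that trajectories remain where the estimates hold, and this is exactly what Lyapunov stability provides.
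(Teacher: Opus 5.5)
Your proposal is correct and follows the same route as the paper: the center manifold theorem applied to the interconnected system, the equivalence between local invariance of the graph and the invariance equation, and local uniqueness from the Lyapunov and Poisson stability in \A{asmp:nonlinear_neutral_stability}, as in \cite{isidori_nonlinear_1995}. You fill in details that the paper leaves to the reference: the block-triangular linearization, the tangency $\frac{\partial\pi}{\partial w}(0)=\Pi_0$, and the contraction-plus-recurrence argument that forces $\pi_1=\pi_2$ near $w=0$.
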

\begin{proof}
The proof follows standard arguments of nonlinear output regulation (see, \textit{e.g.}, \cite{isidori_nonlinear_1995}). By \A{asmp:nonlinear_regularity}, the interconnected system is of class $C^2$ with an equilibrium at the origin. By \A{asmp:nonlinear_stability} and \A{asmp:nonlinear_neutral_stability}, its linearization at the origin has $n$ eigenvalues with negative real part and $\nu$ eigenvalues on the imaginary axis, so the center manifold theorem yields a locally invariant $C^2$ manifold which is the graph of a map $\pi$ with ${\pi(0)=0}$. Local invariance of this graph is equivalent to \eqref{eq:nonlinear_invariance}, and local uniqueness follows from the Lyapunov and Poisson stability in \A{asmp:nonlinear_neutral_stability}.
\end{proof}
\noindent
Lemma \ref{lem:nonlinear_synchronization} is the nonlinear counterpart of Lemma \ref{lem:linear_invariant_map}, where the invariance equation \eqref{eq:nonlinear_invariance} replaces the Sylvester equation \eqref{eq:sylvester}, and $\mathcal M$, a center manifold of the interconnected system, replaces $\mathcal V$. By the center manifold theorem \cite{isidori_nonlinear_1995}, $\mathcal M$ is locally exponentially attractive. After a transient, the reservoir state is thus determined by the exosystem state through the synchronization map $\pi$ \cite{lu_reservoir_2017,hart_generalised_2024}. Together with an exact readout condition, which will be stated in \eqref{eq:nonlinear_perfect_readout}, the invariance equation \eqref{eq:nonlinear_invariance} plays the role of the \gls{FBI} equations of nonlinear output regulation \cite{isidori_nonlinear_1995}, with the input ${u=\ell(w)}$ defined by the interconnection \eqref{eq:interconnection}.
Unlike in the linear case, these conclusions are local and $\pi$ does not admit an explicit expression, in general. For linear reservoirs, however, the next result gives an explicit expression of $\pi$ and shows that $\mathcal M$ attracts every reservoir state.
\subsubsection{Linear reservoirs as \gls{kkl} observers}\label{subsec:nonlinear_observer}
Consider the reservoir \eqref{eq:linear_reservoir} driven by the exosystem \eqref{eq:nonlinear_exosystem} through the interconnection \eqref{eq:interconnection}. The invariance equation \eqref{eq:nonlinear_invariance} then reads
\begin{align}
    \frac{\partial\pi}{\partial w}(w)s(w)
    &=
    F \pi(w)+ G \ell(w),
    \label{eq:kkl_invariance}
\end{align}
which is the equation underlying \gls{kkl} observers \cite{kazantzis_nonlinear_1997,andrieu_existence_2006}. The connection with \gls{kkl} observers  requires the following assumption on the set on which the exosystem \eqref{eq:nonlinear_exosystem} evolves.
\begin{assumptionlist}
\assumption{asmp:nonlinear_backward} The set ${\mathcal K\subseteq\mathcal O_\pi}$ is compact and invariant under the flow of ${\dot w=s(w)}$, that is, ${\varphi_t(\mathcal K)=\mathcal K}$ for all ${t\in\R}$.
\end{assumptionlist}
\noindent
Assumption \A{asmp:nonlinear_backward} ensures that the solutions of \eqref{eq:nonlinear_exosystem} starting in $\mathcal K$ are defined for all past and future times and remain in the domain of $\pi$. As in the linear case, define the error
\begin{align}
    \varepsilon &= x - \pi (w).
    \label{eq:nonlinear_manifold_error_dynamics}
\end{align}
\begin{proposition}[KKL synchronization map]\label{prop:kkl_synchronization}
Consider the reservoir \eqref{eq:linear_reservoir} driven by the exosystem \eqref{eq:nonlinear_exosystem} through the interconnection \eqref{eq:interconnection}. Assume \A{asmp:linear_hurwitz}, \A{asmp:nonlinear_regularity}, \A{asmp:nonlinear_neutral_stability}, and \A{asmp:nonlinear_backward} hold. Then the following statements hold.
{\renewcommand{\theenumi}{\roman{enumi}}\renewcommand{\labelenumi}{(\theenumi)}\setlength{\leftmargini}{2.em}
\begin{enumerate}
    \item\label{item:kkl_integral} The locally unique solution $\pi$ of \eqref{eq:kkl_invariance} is
    \begin{align}
        \pi(w) &= \int_0^\infty e^{Ft}G\ell\bigl(\varphi_{-t}(w)\bigr)\dd t, \quad w\in\mathcal K.
        \label{eq:nonlinear_synchronization_map}
    \end{align}
    \item\label{item:kkl_sync} Along every solution of \eqref{eq:linear_reservoir}, \eqref{eq:nonlinear_exosystem}, \eqref{eq:interconnection} with ${w(0)\in\mathcal K}$, the error \eqref{eq:nonlinear_manifold_error_dynamics} satisfies ${\dot{\varepsilon} = F\varepsilon}$, and $\mathcal M$ is exponentially attractive.
    \item\label{item:kkl_readout} There exists ${\Wout\in\R^{p\times n}}$ such that
    \begin{align}
        \Wout\pi(w)
        &=
        q(w),
        \qquad w\in\mathcal K,
        \label{eq:nonlinear_perfect_readout}
    \end{align}
    if and only if 
    ${\dim\operatorname{span}\begin{bsmallmatrix}\pi\\ q\end{bsmallmatrix}(\mathcal K)=\dim\operatorname{span}\pi(\mathcal K)}.$
    For ${w(0)\in\mathcal K}$, every such $\Wout$ gives a readout error \eqref{eq:nonlinear_readout_error} which converges exponentially to zero along the solutions of \eqref{eq:linear_reservoir},  \eqref{eq:interconnection}, \eqref{eq:nonlinear_exosystem}.
\end{enumerate}
}
\end{proposition}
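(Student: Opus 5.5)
The plan mirrors the proof of Proposition \ref{prop:luenberger}, with the matrix exponential $e^{-St}$ replaced by the backward flow $\varphi_{-t}$ of the exosystem on $\mathcal K$.

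For (\ref{item:kkl_integral}), I would first show that the integral in \eqref{eq:nonlinear_synchronization_map}, call it $\tilde\pi(w)$, is well defined on $\mathcal K$. By \A{asmp:nonlinear_backward}, $\varphi_{-t}(w)\in\mathcal K$ for all $t\ge0$. By \A{asmp:nonlinear_regularity}, $\ell$ is continuous, hence bounded on the compact set $\mathcal K$. Together with $\|e^{Ft}\|\le Me^{-\alpha t}$ from \A{asmp:linear_hurwitz}, this makes the integral absolutely convergent.

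Next I would verify the invariance equation along trajectories rather than pointwise via derivatives, since differentiability of $\tilde\pi$ is not immediate. Substituting $\tau=t-h$ gives
\begin{align*}
\tilde\pi(\varphi_h(w))=\int_0^\infty e^{Ft}G\ell(\varphi_{h-t}(w))\dd t = e^{Fh}\tilde\pi(w)+\int_0^h e^{F(h-\tau)}G\ell(\varphi_\tau(w))\dd\tau .
\end{align*}
Thus $\tilde\pi(\varphi_h(w))$ is the solution of \eqref{eq:linear_reservoir} with $u=\ell(\varphi_h(w))$ and initial state $\tilde\pi(w)$. Hence the graph of $\tilde\pi$ over $\mathcal K$ is invariant. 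Differentiating in $h$ at $h=0$ gives $\frac{\dd}{\dd h}\tilde\pi(\varphi_h(w))|_{h=0}=F\tilde\pi(w)+G\ell(w)$, which is \eqref{eq:kkl_invariance} in Lie-derivative form. Where $\tilde\pi$ is differentiable, this coincides with $\frac{\partial\tilde\pi}{\partial w}s$.

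To identify $\tilde\pi$ with the $\pi$ of Lemma \ref{lem:nonlinear_synchronization}, I would set $\delta(t)=\pi(\varphi_t(w))-\tilde\pi(\varphi_t(w))$ for $w\in\mathcal K$. Both maps satisfy the trajectory version of the invariance equation, so $\dot\delta=F\delta$ for all $t\in\R$, and $\delta$ is bounded because $\pi$ and $\tilde\pi$ are continuous on the compact invariant set $\mathcal K$. Then $\delta(0)=e^{Ft}\delta(-t)$ for every $t\ge0$, and letting $t\to\infty$ gives $\delta(0)=0$. This step is the main obstacle: $\pi$ is only locally unique, so equality holds on $\mathcal K$ rather than on all of $\mathcal O_\pi$, and continuity of $\tilde\pi$ must be checked. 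For continuity I would use dominated convergence, with continuity of $\varphi_{-t}$ in $w$ and the exponential bound.

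For (\ref{item:kkl_sync}), I would compute $\dot\varepsilon=Fx+G\ell(w)-\frac{\dd}{\dd t}\pi(w)$. By the trajectory form of \eqref{eq:kkl_invariance}, this equals $F\varepsilon$, valid for $w(0)\in\mathcal K$ because $\mathcal K$ is invariant. Exponential attractivity then follows exactly as in Proposition \ref{prop:luenberger}(\ref{item:sync_attractive}).

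For (\ref{item:kkl_readout}), the linear-algebra part comes first. A linear $\Wout$ with $\Wout\pi(w)=q(w)$ on $\mathcal K$ exists if and only if the linear map $\pi(w)\mapsto q(w)$ extends consistently. This holds if and only if every linear relation $c^\top\pi(w)=0$ for all $w\in\mathcal K$ forces the corresponding relation on $q$. That condition is equivalent to the projection $\operatorname{span}\{(\pi(w),q(w))\}\to\operatorname{span}\pi(\mathcal K)$ being injective, which is exactly the stated dimension equality. Finally, $e=\Wout x-q(w)=\Wout\varepsilon$, which converges to zero exponentially by (\ref{item:kkl_sync}).
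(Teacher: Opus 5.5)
Your proposal is correct. Parts (ii) and (iii) follow the paper. In (iii) you work with the span of the values $(\pi(w),q(w))$ and its projection onto the first factor, while the paper invokes equality of row and column rank and counts linearly independent component functions of $\pi$ and $q$ on $\mathcal K$. These are two sides of the same identity.

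The real difference is in (i). The paper starts from the $C^2$ map $\pi$ of Lemma~\ref{lem:nonlinear_synchronization} and differentiates $e^{Ft}\pi(\varphi_{-t}(w))$ using \eqref{eq:kkl_invariance}. It then integrates over $[0,\infty)$. The boundary term $e^{F\tau}\pi(\varphi_{-\tau}(w))$ vanishes because $\pi$ is bounded on the compact set $\mathcal K$. So the formula follows in one line, and the same line shows that any solution bounded on $\mathcal K$ must equal the integral.

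You instead define the integral $\tilde\pi$ as a candidate and check by variation of constants that its graph over $\mathcal K$ is invariant. You then identify it with $\pi$ through the bounded solution $\delta$ of $\dot\delta=F\delta$ on $\R$. This route does buy something: it shows, with no differentiability assumed, that the integral solves the invariance equation in trajectory form. That is the existence half of the KKL construction, and it does not use the center manifold lemma.

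For the statement as posed, however, your comparison step is the paper's argument in disguise. Applying variation of constants directly to $t\mapsto\pi(\varphi_t(w))$ on $[-\tau,0]$ and letting $\tau\to\infty$ already gives $\pi=\tilde\pi$ on $\mathcal K$, so constructing $\tilde\pi$ is unnecessary. The continuity check you flag as the main obstacle is also unnecessary. The only property of $\tilde\pi$ that $\delta$ needs is boundedness on $\mathcal K$, and that follows from $\|\tilde\pi(w)\|\le M\|G\|\sup_{\mathcal K}\|\ell\|/\alpha$.

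There is one slip to fix in (iii). A relation $c^\top\pi(w)=0$ on $\mathcal K$, with $c\in\R^n$, is a relation among the components of $\pi$. It has no ``corresponding relation'' on $q$, which takes values in $\R^p$. The condition you actually need concerns relations among values: $\sum_j\alpha_j\pi(w_j)=0$ must imply $\sum_j\alpha_j q(w_j)=0$ for every finite family $w_j\in\mathcal K$. This is exactly injectivity of your projection, so replacing that sentence leaves the rest of your argument intact.
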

\begin{proof}
(\ref{item:kkl_integral}) The argument follows \cite{andrieu_existence_2006}. Fix ${w\in\mathcal K}$. By assumption \A{asmp:nonlinear_backward}, ${\varphi_{-t}(w)\in\mathcal K\subseteq\mathcal O_\pi}$ for all ${t\geq0}$. Then \eqref{eq:kkl_invariance} implies
$${\frac{\dd}{\dd t}\bigl[e^{Ft}\pi(\varphi_{-t}(w))\bigr]=-e^{Ft}G\ell(\varphi_{-t}(w))}.$$ Since $\pi$ and $\ell$ are bounded on the compact set $\mathcal K$ and $F$ is Hurwitz by \A{asmp:linear_hurwitz}, integrating over $\Rp$ yields \eqref{eq:nonlinear_synchronization_map}.

(\ref{item:kkl_sync}) 
By assumption \A{asmp:nonlinear_backward}, ${w(0)\in\mathcal K}$ implies ${w(t)\in\mathcal K}$ for all ${t\geq0}$. Then \eqref{eq:kkl_invariance} gives 
$${\dot{\varepsilon} = Fx+G\ell(w)-\frac{\partial\pi}{\partial w}(w)s(w) = F\varepsilon},$$
which converges to zero exponentially by \A{asmp:linear_hurwitz}.

(\ref{item:kkl_readout}) By the equality of row and column rank, ${\dim\operatorname{span}\begin{bsmallmatrix}\pi\\ q\end{bsmallmatrix}(\mathcal K)}$ and ${\dim\operatorname{span}\pi(\mathcal K)}$ are the numbers of linearly independent components of $\begin{bsmallmatrix}\pi\\ q\end{bsmallmatrix}$ and of $\pi$, viewed as functions on $\mathcal K$. They coincide if and only if each component of $q$ is a linear combination of the components of $\pi$ on $\mathcal K$, that is, if and only if \eqref{eq:nonlinear_perfect_readout} holds for some $\Wout$, as in the proof of Proposition \ref{prop:luenberger}(\ref{item:exact_readout}). For every such $\Wout$, along a solution with ${w(0)\in\mathcal K}$, ${e=\Wout x-q(w)=\Wout\varepsilon}$, which converges exponentially to zero by (\ref{item:kkl_sync}).
\end{proof}
\noindent
Proposition \ref{prop:kkl_synchronization} strengthens Lemma \ref{lem:nonlinear_synchronization} for linear reservoirs: it provides an explicit expression of the synchronization map $\pi$, as the response of the reservoir to the past of the driving signal, and shows that $\mathcal M$ is globally attractive for exosystem states in $\mathcal K$. The integral \eqref{eq:nonlinear_synchronization_map} coincides with the expression of the synchronization map of linear reservoirs obtained in \cite{hart_generalised_2024} and, after a change of variables, with the integral defining \gls{kkl} observers \cite{andrieu_existence_2006}. Linear reservoirs driven by nonlinear exosystems may therefore be interpreted as functional \gls{kkl} observers of $q(w)$ whenever \eqref{eq:nonlinear_perfect_readout} holds, and as \gls{kkl} observers of $w$ if, in addition, $\pi$ is injective on $\mathcal K$ \cite{kazantzis_nonlinear_1997,andrieu_existence_2006}. The exact readout condition (\ref{item:kkl_readout}) states that each component of $q$ is, on $\mathcal K$, a linear combination of the components of $\pi$, and reduces to the rank condition of Proposition \ref{prop:luenberger} for ${\pi(w)=\Pi w}$, ${q(w)=Qw}$, and $\mathcal K$ spanning $\R^\nu$. The next result characterizes the signals which a linear reservoir with a linear readout can predict exactly.
\begin{corollary}[Immersion]
\label{cor:nonlinear_immersion}
Consider the reservoir \eqref{eq:linear_reservoir} driven by the exosystem \eqref{eq:nonlinear_exosystem} through the interconnection \eqref{eq:interconnection}, with ${p=m}$ and ${q=\ell}$, and let $\pi$ be the map of Lemma \ref{lem:nonlinear_synchronization}. Assume \A{asmp:linear_hurwitz}, \A{asmp:nonlinear_regularity}, \A{asmp:nonlinear_neutral_stability}, and \A{asmp:nonlinear_backward} hold, and that $W_{\mathrm{out}}$ satisfies \eqref{eq:nonlinear_perfect_readout}. Then, along every solution with ${w(0)\in\mathcal K}$, ${\xi=\pi(w)}$ satisfies
\begin{align}
    \dot{\xi}
    &=
    \bigl(F+GW_{\mathrm{out}}\bigr)\xi,
    &
    y
    &=
    W_{\mathrm{out}}\xi .
    \label{eq:nonlinear_immersed_system}
\end{align}
\end{corollary}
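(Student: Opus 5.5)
The plan is to differentiate $\xi(t)=\pi(w(t))$ along exosystem trajectories and combine the invariance equation \eqref{eq:kkl_invariance} with the exact readout condition \eqref{eq:nonlinear_perfect_readout}. The argument parallels the proof of Proposition \ref{prop:free_modes}, with $\pi$ in place of $\Pi$.

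First, fix a solution with $w(0)\in\mathcal K$. By \A{asmp:nonlinear_backward}, $w(t)\in\mathcal K\subseteq\mathcal O_\pi$ for all $t\geq0$. Hence $\xi(t)=\pi(w(t))$ is well defined, and it is $C^1$ in $t$ because $\pi$ is $C^2$ (Lemma \ref{lem:nonlinear_synchronization}). By the chain rule and \eqref{eq:kkl_invariance},
\begin{align*}
    \dot\xi = \frac{\partial\pi}{\partial w}(w)s(w) = F\pi(w)+G\ell(w).
\end{align*}
Since $q=\ell$ and $w(t)\in\mathcal K$, \eqref{eq:nonlinear_perfect_readout} gives $\ell(w(t))=\Wout\pi(w(t))=\Wout\xi(t)$. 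Substituting yields $\dot\xi=(F+G\Wout)\xi$. Likewise $y=q(w)=\ell(w)=\Wout\xi$, which is the output equation in \eqref{eq:nonlinear_immersed_system}.

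The one point needing care is the differentiation step. Relation \eqref{eq:kkl_invariance} is a pointwise PDE identity, so it must hold at every point of $\mathcal K$, including points where $\mathcal K$ has empty interior (for instance, a periodic orbit). This is guaranteed because $\pi$ solves \eqref{eq:kkl_invariance} on the open set $\mathcal O_\pi\supseteq\mathcal K$. Alternatively, one can use the explicit representation \eqref{eq:nonlinear_synchronization_map} from Proposition \ref{prop:kkl_synchronization}(\ref{item:kkl_integral}) together with the flow property $\varphi_{-\tau}(\varphi_t(w))=\varphi_{t-\tau}(w)$. Then
\begin{align*}
    \pi(w(t))=\int_{-\infty}^{t}e^{F(t-r)}G\ell(\varphi_r(w(0)))\dd r,
\end{align*}
which is differentiable in $t$ with derivative $F\pi(w(t))+G\ell(w(t))$. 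This also confirms the identity directly on $\mathcal K$.

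Finally, I would remark on the interpretation. The exosystem output $y$ is reproduced by the linear system \eqref{eq:nonlinear_immersed_system}, so $\pi$ immerses \eqref{eq:nonlinear_exosystem} into a finite-dimensional linear system. This is the nonlinear analogue of \eqref{eq:closed_loop_intertwining}.
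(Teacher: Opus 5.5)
Your proof is correct and follows the paper's argument. Invariance of $\mathcal K$ keeps $w(t)$ in the domain of $\pi$, the chain rule with \eqref{eq:kkl_invariance} gives $\dot\xi=F\pi(w)+G\ell(w)$, and \eqref{eq:nonlinear_perfect_readout} with $q=\ell$ yields both equations in \eqref{eq:nonlinear_immersed_system}; your alternative check via the integral representation \eqref{eq:nonlinear_synchronization_map} is also valid but not needed.
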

\begin{proof}
By \A{asmp:nonlinear_backward}, along a solution with ${w(0)\in\mathcal K}$, ${w(t)\in\mathcal K}$ for all ${t\geq0}$. Then \eqref{eq:kkl_invariance} and \eqref{eq:nonlinear_perfect_readout}, with ${q=\ell}$, imply ${\dot\xi=\frac{\partial\pi}{\partial w}(w)s(w) = F\pi(w)+G\ell(w) = (F+GW_{\mathrm{out}})\xi}$ and ${y=\ell(w)=W_{\mathrm{out}}\pi(w)=W_{\mathrm{out}}\xi}$.
\end{proof}
\noindent
Corollary \ref{cor:nonlinear_immersion} is the nonlinear counterpart of Proposition \ref{prop:free_modes}. The map $\pi$ sends every solution of the exosystem \eqref{eq:nonlinear_exosystem} in $\mathcal K$ to a solution of the closed-loop reservoir \eqref{eq:loop_closure}, so that the exosystem with output $\ell$ is \textit{immersed} into the linear system \eqref{eq:nonlinear_immersed_system} in the sense of \cite{isidori_nonlinear_1995}. Exact prediction with a linear reservoir and a linear readout is therefore possible only for signals which admit an $n$-dimensional autonomous linear realization, \textit{i.e.}, a finite-dimensional linear internal model. A related restriction appears in \cite{depersis_machines_2026}, where prediction is characterized through the modes visible at the output.

\begin{remark}[Nonlinear moment matching]\label{rem:moments}
The invariance equation \eqref{eq:nonlinear_invariance} is also the starting point of model reduction by moment matching for nonlinear systems \cite{astolfi_model_2010}. Under \A{asmp:nonlinear_regularity}--\A{asmp:nonlinear_neutral_stability}, which ensure that \eqref{eq:nonlinear_invariance} admits a locally unique solution $\pi$, the \emph{moment} of the reservoir \eqref{eq:general_reservoir}, \eqref{eq:linear_readout_definition} at the signal generator $(s,\ell)$ is the steady-state response ${w\mapsto\Wout\pi(w)}$ of the reservoir output to the signal generator \cite{astolfi_model_2010}. The exact readout condition \eqref{eq:nonlinear_perfect_readout} thus states that the moment of the trained reservoir at $(s,\ell)$ coincides with the target map $q$, and reservoir computing can be regarded as enforcing moment matching in a data-driven fashion \cite{scarciotti_data_2017}.
\end{remark}
\section{From analysis to reservoir design}\label{sec:design}
The interpretation of reservoir computing as a manifestation of the internal model principle has implications for reservoir analysis and design, which this section illustrates. On the analysis side, it explains limitations of exact training: the transverse modes left free by training decide the stability of the predictor, and the internal model embedded by training amplifies disturbances at frequencies defined by the spectrum of the signal generator. On the design side, it replaces empirical reservoir selection by an observer design problem whenever information on the origin of the signals is available.
\subsection{Fundamental limitations of linear reservoir computing}\label{sec:failure}
\begin{figure*}[t]
    \centering
{%
\begin{tikzpicture}
\begin{semilogxaxis}[
  width=0.325\textwidth, height=4.0cm, scale only axis=false,
  xlabel={$T$}, ylabel={$\Re\,\lambda_{\max}(F+GW_T)$},
  xmin=1, xmax=128, ymin=-3.2, ymax=2.6,
]
\addplot [rcGray, solid, thin] coordinates {(1,0) (128,0)};
\addplot [rcBlue, solid, mark=*, mark size=1.1pt] table [col sep=comma, x=T, y=rest] {data/example_abscissa.csv};
\addplot [rcPink, solid, mark=square*, mark size=1.1pt] table [col sep=comma, x=T, y=offset] {data/example_abscissa.csv};
\addplot [rcBlue, densely dotted, line width=0.9pt, mark=o, mark size=1.1pt] table [col sep=comma, x=T, y=transverse_rest] {data/example_abscissa.csv};
\addplot [rcPink, densely dotted, line width=0.9pt, mark=square, mark size=1.1pt] table [col sep=comma, x=T, y=transverse_offset] {data/example_abscissa.csv};
\end{semilogxaxis}
\end{tikzpicture}}\hfill
{%
\begin{tikzpicture}
\begin{axis}[
  width=0.325\textwidth, height=4.0cm, scale only axis=false,
  xlabel={$t$}, ylabel={$y$, $\hat y$},
  xmin=0, xmax=30, ymin=-3, ymax=3,
]
\addplot [rcTeal, solid, line width=1.3pt] table [col sep=comma, x=t, y=y] {data/example_prediction.csv};
\addplot [rcPink, dashed, line width=1.0pt] table [col sep=comma, x=t, y=stable] {data/example_prediction.csv};
\addplot [rcBlue, densely dotted, line width=0.9pt] table [col sep=comma, x=t, y=unstable] {data/example_prediction.csv};
\end{axis}
\end{tikzpicture}}\hfill
{%
\begin{tikzpicture}
\begin{axis}[
  width=0.325\textwidth, height=4.0cm, scale only axis=false,
  xlabel={$t$}, ylabel={$\hat y$},
  xmin=0, xmax=30, ymin=-1.1, ymax=1.1,
]
\addplot [rcPink, solid] table [col sep=comma, x=t, y=resonant] {data/example_disturbance.csv};
\addplot [rcBlue, dashed] table [col sep=comma, x=t, y=offresonant] {data/example_disturbance.csv};
\end{axis}
\end{tikzpicture}}
    \caption{Limitations of exact training in the example of Section \ref{sec:failure}, with ${\omega=1}$.
    (a) Largest real part of the eigenvalues of ${F+GW_T}$ (solid) and value of ${N^\top(F+GW_T)N}$ (dotted), for readouts trained by least squares on ${[0,T]}$ from ${x(0)=0}$ (circles) and ${x(0)=\Pi w(0)+6N}$ (squares).
    (b) Target signal (solid) and predictions of the exact readouts with transverse eigenvalue $-1$ (dashed) and $0.3$ (dotted), from ${\Pi w(0)}$ perturbed by $0.05$ along the transverse eigenvector.
    (c) Response of the predictor with transverse eigenvalue $-1$ to an input disturbance ${d(t)=0.01\sin(\Omega t)}$, for ${\Omega=\omega}$ (solid) and ${\Omega=2\omega}$ (dashed).}
    \label{fig:transverse}
\end{figure*}
We consider a three-dimensional reservoir driven by a sinusoid to illustrate, in a simple setting, the mechanism by which exact training may produce unstable predictors and the fundamental limitation that the embedded internal model imposes on their robustness. 

Consider the reservoir \eqref{eq:linear_reservoir}, with ${n=3}$, ${m=1}$, ${F=-\operatorname{diag}(a_1,a_2,a_3)}$, with distinct ${a_i>0}$, and ${G=[\,1\ \ 1\ \ 1\,]^\top}$, driven through \eqref{eq:interconnection} by the exosystem \eqref{eq:linear_exosystem}, with ${\omega>0}$ and
\begin{align}
    S
    &=
    \begin{bmatrix}
        0 & \omega\\
        -\omega & 0
    \end{bmatrix},
    &
    L=Q
    &=
    \begin{bmatrix}
        1 & 0
    \end{bmatrix}.
    \label{eq:example_exosystem}
\end{align}
Then \A{asmp:linear_hurwitz}--\A{asmp:linear_observability} hold, and the solution $\Pi$ of \eqref{eq:sylvester}, whose $i$th row is ${(a_i,-\omega)/(a_i^2+\omega^2)}$, has rank two. With $\Pi^\dagger$ and the unit vector $N$ spanning $\ker\Pi^\top$ as in Remark \ref{rem:transverse}, every solution of \eqref{eq:perfect_readout} satisfies ${\Wout=\Wout(\Pi\Pi^\dagger+NN^\top)}$, so every exact readout may be expressed as
\begin{align}
    \Wout
    &=
    Q\Pi^\dagger+\kappa N^\top,
    \qquad
    \kappa\in\R.
    \label{eq:example_readouts}
\end{align}
By Proposition \ref{prop:free_modes} and Remark \ref{rem:transverse}, the closed loop \eqref{eq:loop_closure} has the eigenvalues ${\lambda_{\pm} = \pm i\omega}$ and the remaining transverse eigenvalue ${N^\top(F+G\Wout)N}$, which, since ${\Pi^\dagger N=0}$, reads
\begin{align}
    \lambda_{\mathrm{tr}}
    &=
    N^\top FN+\kappa N^\top G.
    \label{eq:example_transverse}
\end{align}
Since ${N^\top G\neq0}$ by Remark \ref{rem:transverse}, every real value of $\lambda_{\mathrm{tr}}$ may be attained by an exact readout. For ${\omega=1}$ and ${(a_1,a_2,a_3)=(1,2,3)}$, ${N=\tfrac{1}{\sqrt{51}}[\,1\ \ -5\ \ 5\,]^\top}$, so 
$${\lambda_{\mathrm{tr}}=-\tfrac{42}{17}+\tfrac{\kappa}{\sqrt{51}}},$$ 
and the closed loop is unstable for ${\kappa>42\sqrt{51}/17\approx17.6}$ despite zero training error, see Fig. \ref{fig:transverse}(b). 

The value of $\kappa$ is selected by the training transient. Indeed, since ${N^\top\Pi=0}$, the component of the reservoir state along $N$ is ${N^\top x(t)=N^\top e^{Ft}\bigl(x(0)-\Pi w(0)\bigr)}$, which vanishes on synchronized data, ${x(0)=\Pi w(0)}$. In that case, the least-squares problem \eqref{eq:least_squares_minimization} does not depend on $\kappa$. Otherwise, $\kappa$ is determined by the transient. For example, let ${w(0)=(1,0)}$ and let $W_T$ denote the minimizer of \eqref{eq:least_squares_minimization} on $[0,T]$. As $T$ grows, $W_T$ approaches an exact readout \eqref{eq:example_readouts} with transverse eigenvalue ${\lambda_{\mathrm{tr}}=0}$ when ${x(0)=0}$, and ${\lambda_{\mathrm{tr}}\approx-2.41}$ when ${x(0)=\Pi w(0)+6N}$, see Fig. \ref{fig:transverse}(a). For finite $T$, however, $W_T$ does not satisfy \eqref{eq:perfect_readout}, and the eigenvalues of ${F+GW_T}$ near ${\pm i\omega}$ leave the imaginary axis: ${F+GW_T}$ is unstable for every $T$ in Fig. \ref{fig:transverse}(a) when ${x(0)=0}$, and Hurwitz for ${T\geq8}$ when ${x(0)=\Pi w(0)+6N}$, so that the prediction grows or decays, respectively.

The same example illustrates a fundamental limitation of exact prediction.  Let ${v_\pm}$ be eigenvectors of $S$ associated with ${\lambda_{\pm} = \pm i\omega}$. By \eqref{eq:closed_loop_intertwining}, ${\Pi v_\pm}$ are eigenvectors of ${F+G\Wout}$ associated with ${\lambda_{\pm}}$. These modes are controllable from $G$, since ${(F+G\Wout,G)}$ is controllable by \A{asmp:linear_controllability}, and observable from $\Wout$, since ${\Wout\Pi v_\pm=Lv_\pm\neq0}$ by \A{asmp:linear_observability}. Hence ${\lambda_{\pm}}$ are poles of the transfer function ${\Wout(sI_n-F-G\Wout)^{-1}G}$ from a disturbance $d(t)\in\R^m$, acting via ${u=\Wout x+d}$, to the prediction $\hat y$. The predictor thus has infinite gain at the frequency of the signal it reproduces: for ${\lambda_{\mathrm{tr}}<0}$, the response to ${d(t)=\sin(\omega t)}$ is unbounded, whereas the response to ${d(t)=\sin(2\omega t)}$ remains bounded, see Fig. \ref{fig:transverse}(c).
\subsection{Reservoir design as observer design}\label{sec:observer_design}
\begin{figure*}[t]
    \centering
{%
\begin{tikzpicture}
\begin{axis}[
  width=0.325\textwidth, height=4.0cm, scale only axis=false,
  xlabel={$t$}, ylabel={$\theta$, $x_1$},
  xmin=0, xmax=10, ymin=-3.2, ymax=3.2,
]
\addplot [rcGray, solid, line width=1.4pt] table [col sep=comma, x=t, y=theta] {data/pendulum_transient.csv};
\addplot [rcBlue, solid] table [col sep=comma, x=t, y=x1_a05] {data/pendulum_transient.csv};
\addplot [rcPink, dashed] table [col sep=comma, x=t, y=x1_a1] {data/pendulum_transient.csv};
\addplot [rcNavy, dashdotted] table [col sep=comma, x=t, y=x1_a2] {data/pendulum_transient.csv};
\draw [rcGray, thin] (axis cs:0,1.2) rectangle (axis cs:1.5,3.1);
\coordinate (magnified region) at (axis cs:1.5,3.1);
\coordinate (inset position) at (rel axis cs:0.34,0.55);
\end{axis}
\begin{axis}[
  name=inset, at={(inset position)}, anchor=south west, width=1.6cm, height=1.05cm, scale only axis=true,
  xmin=0, xmax=1.5, ymin=1.2, ymax=3.1, xtick={0,1.5}, ytick={1.5,3}, xticklabels={$0$,$1.5$}, yticklabels={$1.5$,$3$},
  tick label style={font=\tiny}, axis background/.style={fill=white}, axis line style={rcGray, thin},
]
\addplot [rcGray, solid, line width=1.2pt] table [col sep=comma, x=t, y=theta] {data/pendulum_transient.csv};
\addplot [rcBlue, solid] table [col sep=comma, x=t, y=x1_a05] {data/pendulum_transient.csv};
\addplot [rcPink, dashed] table [col sep=comma, x=t, y=x1_a1] {data/pendulum_transient.csv};
\addplot [rcNavy, dashdotted] table [col sep=comma, x=t, y=x1_a2] {data/pendulum_transient.csv};
\end{axis}
\draw [rcGray, thin] (magnified region) -- (inset.south west);
\end{tikzpicture}
}\hfill
{%
\begin{tikzpicture}
\begin{semilogyaxis}[
  width=0.325\textwidth, height=4.0cm, scale only axis=false,
  xlabel={$t$}, ylabel={$\|x-w\|/\|w\|$},
  xmin=0, xmax=20, ymin=1e-9, ymax=10,
]
\addplot [rcBlue, solid] table [col sep=comma, x=t, y=a05] {data/pendulum_sync_error.csv};
\addplot [rcPink, dashed] table [col sep=comma, x=t, y=a1] {data/pendulum_sync_error.csv};
\addplot [rcNavy, dashdotted] table [col sep=comma, x=t, y=a2] {data/pendulum_sync_error.csv};
\end{semilogyaxis}
\end{tikzpicture}}\hfill
{%
\begin{tikzpicture}
\begin{loglogaxis}[
  width=0.325\textwidth, height=4.0cm, scale only axis=false,
  xlabel={$a$}, ylabel={$e_{\mathrm{rms}}$},
  xmin=0.25, xmax=10, ymin=1.8e-4, ymax=9e-4,
  xtick={0.25,0.5,1,2,5,10}, xticklabels={$0.25$,$0.5$,$1$,$2$,$5$,$10$},
  ytick={2e-4,3e-4,5e-4,8e-4}, yticklabels={$2\cdot10^{-4}$,$3\cdot10^{-4}$,$5\cdot10^{-4}$,$8\cdot10^{-4}$},
]
\addplot [name path=upper, draw=none] table [col sep=comma, x=a, y=upper] {data/pendulum_noise.csv};
\addplot [name path=lower, draw=none] table [col sep=comma, x=a, y=lower] {data/pendulum_noise.csv};
\addplot [rcBlue!25] fill between [of=upper and lower];
\addplot [rcBlue, solid, mark=*, mark size=1.2pt] table [col sep=comma, x=a, y=mean] {data/pendulum_noise.csv};
\end{loglogaxis}
\end{tikzpicture}}
    \caption{Reservoir design for the frictionless pendulum \eqref{eq:pendulum},
    released from rest at ${w_1=2.5}$.
    (a) Target ${y=w_1}$ (thick) and estimate ${\hat y=x_1}$ of the reservoir \eqref{eq:pendulum_reservoir}, driven by \eqref{eq:pendulum} and starting at rest, for the eigenvalues
    $-a$ and $-2a$ of \eqref{eq:pendulum_error}, with ${a=0.5}$ (solid), ${a=1}$ (dashed),
    and ${a=2}$ (dash-dotted).
    (b) Relative synchronization error ${\|x-w\|/\|w\|}$ for the same gains and line styles.
    (c) Relative \gls{rms} estimation error \eqref{eq:relative_rms_error} under
    measurement noise of standard deviation $10^{-2}$, as a function of $a$
    (mean over twenty noise realizations, with the band of one standard deviation shaded).}
    \label{fig:pendulum}
\end{figure*}
Reservoir computing practice often selects large reservoirs at random, relying on their dimension and nonlinearity for expressive power \cite{zhang_survey_2023}, an approach whose costs in dimension, tuning, and interpretability are well-documented \cite{lukosevicius_reservoir_2009,gallicchio_deep_2017}. When information about the origin of the signal is available, the observer interpretation of Sections \ref{sec:linear} and \ref{sec:nonlinear} suggests instead a parsimonious design: a reservoir which contains a copy of the signal generator, corrected by the measured signal, as in classical  observer design \cite{luenberger_introduction_1971,khalil_nonlinear_2002}. We illustrate this point on the frictionless pendulum
\begin{align}
    \dot w_1
    &=
    w_2,
    &
    \dot w_2
    &=
    -\sin w_1,
    &
    v
    &=
    w_1,
    \label{eq:pendulum}
\end{align}
that is, the nonlinear exosystem \eqref{eq:nonlinear_exosystem}, with ${w(t)\in\R^2}$, ${s(w)=(w_2,-\sin w_1)}$, ${\ell(w)=w_1}$, and target map ${q=\ell}$. Consider the two-dimensional reservoir
\begin{align}
    \dot x_1 &= x_2+k_1(u-x_1), &
    \dot x_2 &= -\sin u+k_2(u-x_1),
    \label{eq:pendulum_reservoir}
\end{align}
with gains ${k_1,k_2>0}$, driven by \eqref{eq:pendulum} through the interconnection \eqref{eq:interconnection}. A direct computation shows that the identity map ${\pi(w)=w}$ solves the invariance equation \eqref{eq:nonlinear_invariance} on $\R^2$, and that the error ${\varepsilon=x-w}$, defined as in \eqref{eq:nonlinear_manifold_error_dynamics}, satisfies
\begin{align}
    \dot\varepsilon
    &=
    \begin{bmatrix}
        -k_1 & 1\\
        -k_2 & 0
    \end{bmatrix}
    \varepsilon .
    \label{eq:pendulum_error}
\end{align}
The eigenvalues of \eqref{eq:pendulum_error} are the roots of the characteristic polynomial ${\psi(\lambda)=\lambda^2+k_1\lambda+k_2}$, and can be assigned to any conjugate pair in the open left half-plane through ${k_1,k_2>0}$, using linear observer design tools \cite{luenberger_introduction_1971}. For example, choosing ${\psi(\lambda)=(\lambda+a)(\lambda+2a)}$, with ${a>0}$, gives ${k_1=3a}$ and ${k_2=2a^2}$.
A direct computation also shows that ${\Wout=[\,1\ \ 0\,]}$ satisfies the exact readout condition \eqref{eq:nonlinear_perfect_readout}. This readout is the unique minimizer of \eqref{eq:least_squares_minimization} on synchronized data, ${x=w}$, whenever ${\int_0^Tww^\top\dd t\succ0}$ \cite{padoan_geometric_2017}, which holds on every non-equilibrium solution of \eqref{eq:pendulum}. Finally, closing the loop with ${u=\Wout x=x_1}$ turns \eqref{eq:pendulum_reservoir} into the pendulum \eqref{eq:pendulum} itself. The reservoir thus contains an internal model of the signal generator, and its synchronization rate is a design variable, independent of training.

Fig. \ref{fig:pendulum} shows the target ${y=w_1}$, the estimate ${\hat y=x_1}$ of the driven reservoir, and the relative synchronization error for ${a\in\{0.5,1,2\}}$, along the solution released from rest at ${w_1(0)=2.5}$. The ease of this design allows one to trade synchronization speed against noise sensitivity. With additive measurement noise, ${u=w_1+d}$, a first-order expansion of \eqref{eq:pendulum_reservoir} gives
\begin{align*}
    \dot\varepsilon_1
    &=
    \varepsilon_2-k_1\varepsilon_1+k_1d,
    &
    \dot\varepsilon_2
    &=
    -k_2\varepsilon_1+(k_2-\cos w_1)d,
\end{align*}
which shows that large gains accelerate synchronization but amplify the noise entering through the output-injection terms, whereas small gains slow down synchronization. Fig. \ref{fig:pendulum} shows, as a function of $a$, the relative \gls{rms} estimation error
\begin{align}
    e_{\mathrm{rms}}=\frac{\|\hat y-y\|_{L^2}}{\|y\|_{L^2}},
    \label{eq:relative_rms_error}
\end{align}
where $\|\cdot\|_{L^2}$ denotes the $L^2$ norm on ${[2T_0,4T_0]}$ and $T_0$ is the period of the solution. The error is minimized at an intermediate gain, a trade-off which the observer viewpoint makes explicit.
\section{Conclusion}\label{sec:conclusion}
The paper has shown that learning in reservoir computing is a manifestation of the internal model principle, establishing connections with output regulation, observer design, and moment matching both for linear and nonlinear systems.  Several directions remain open. Our perspective suggests the study of fundamental limitations of reservoir computing, expressed for linear reservoirs by sensitivity integrals, of the interplay between the cost minimized by training and the closed-loop dynamics of the predictor, and of the role of persistence of excitation \cite{padoan_geometric_2017} and least-squares moment matching \cite{padoan_least_2023} in readout training. Principled reservoir design in physical substrates is another natural next step \cite{tanaka_recent_2019,stepney_physical_2024}.
\section*{Acknowledgment} 
 We gratefully acknowledge the authors of \cite{bizyaeva_kkl_2026}, whose closely related work was developed independently and in parallel with ours. We thank them for valuable discussions and for bringing their work to our attention. 
\bibliographystyle{ieeetr}
\bibliography{ref}
\end{document}